\definecolor{lightgrey}{rgb}{0.9, 0.9, 0.9}
\def\RR{\mathbb{R}}
\newcommand{\treespace}[1]{\ensuremath{\mathcal{T}_{#1}}}
\newcommand{\simplex}[1]{\ensuremath{\mathcal{S}^{#1}}}
\newcommand{\edge}[1]{\ensuremath{\mathcal{E}({#1})}}
\newcommand{\orthant}{\ensuremath{\mathcal{O}}}
\newcommand{\pr}[2][]{\ensuremath{\mathrm{Pr}_{#1}\!\left(#2\right)}}
\DeclareMathOperator*{\argmin}{arg\,min}
\newtheorem{theorem}{Theorem}
\newtheorem{lemma}{Lemma}
\newtheorem{definition}{Definition}
\newtheorem{algo}{Algorithm}
\theoremstyle{definition}
\begin{document}

\begin{abstract}
Most biological data are multidimensional, posing a major challenge to human comprehension and computational analysis. 
Principal component analysis is the most popular approach to rendering two- or three-dimensional representations of the major trends in such multidimensional data. 
The problem of multidimensionality is acute in the rapidly growing area of phylogenomics.  
Evolutionary relationships are represented by phylogenetic trees, and very typically a phylogenomic analysis results in a collection of such trees, one for each gene in the analysis. 
Principal component analysis offers a means of quantifying variation and summarizing a collection of phylogenies by dimensional reduction. 
However, the space of all possible phylogenies on a fixed set of species does not form a Euclidean vector space, so principal component analysis must be reformulated in the geometry of tree-space, which is a CAT(0) geodesic metric space. 
Previous work has focused on construction of the first principal component, or principal geodesic. 
Here we propose a geometric object which represents a $k$-th order principal component: the locus of the weighted Fr\'echet mean of $k+1$ points in tree-space, where the weights vary over the standard $k$-dimensional simplex. 
We establish basic properties of these objects, in particular that locally they generically have dimension $k$, and we propose an efficient algorithm for projection onto these surfaces. 
Combined with a stochastic optimization algorithm, this projection algorithm gives a procedure for constructing a principal component of arbitrary order in tree-space. 
Simulation studies confirm these algorithms perform well, and they are applied to data sets of Apicomplexa gene trees and the African coelacanth genome. 
The results enable visualizations of slices of tree-space, revealing structure within these complex data sets.   
\end{abstract}

\title[Principal component analysis in tree-space]{Principal component analysis and the locus of the {Fr\'echet} mean in the space of phylogenetic trees}

\author{Tom M.\ W.\ Nye}
\address{School of Mathematics and Statistics\\
Newcastle University\\NE1 7RU\\U.K.}
\email{tom.nye@ncl.ac.uk}

\author{Xiaoxian Tang}
\address{Faculty 3: Mathematics / Computer Sciences\\ University of Bremen\\ Bremen 28359\\ Germany}
\email{xtang@uni-bremen.de}

\author{Grady Weyenberg}
\address{MRC Integrative Epidemiology Unit\\
University of Bristol\\
Oakfield House\\
Oakfield Grove\\
Bristol\\
BS8 2BN\\
U.K.}
\email{grady.weyenberg@bristol.ac.uk}

\author{Ruriko Yoshida}
\address{Department of Operations Research\\  
Naval Postgraduate School\\ Monterey\\ CA 93943-5219\\ U.S.A.}
\email{ryoshida@nps.edu}

\maketitle

\section{Introduction}
One of the great opportunities offered by modern genomics is that
phylogenetics applied on a genomic scale (phylogenomics) should be
especially powerful for elucidating gene and genome evolution,
relationships among species and populations, and processes of
speciation and molecular evolution. However, a well-recognized hurdle
is the sheer volume of genomic data that can now be generated
relatively cheaply and quickly, but for which analytical tools are
lagging. There is a major need to explore new approaches to undertake
comparative genomic and phylogenomic studies much more rapidly and
robustly than existing tools allow. 
Principal component analysis is the most popular approach for reducing the
dimension of multidimensional data sets.  
The problem of multidimensionality is acute in the rapidly growing area of phylogenomics, which can provide insight into relationships and evolutionary patterns of a diversity of organisms, from humans, plants and animals, to microbes and viruses. 

Data sets consisting of collections of phylogenetic trees are challenging to analyse, due both to high dimensionality and the complexity of the space containing the data. 
Multivariate statistical procedures such as outlier detection \cite{wey14}, clustering \cite{gori2016clustering} and multidimensional scaling \cite{hill05} have previously been applied to such data sets. 
However, principal component analysis is perhaps the most useful multivariate statistical tool for exploring high-dimensional data sets, due to its flexibility and its attractive properties. 
For example, \cite{pca2} and \cite{pca3} recently showed that principal component analysis automatically projects to the subspace where the global solution of K-means clustering lies, and so facilitates K-means clustering to find near-optimal solutions. 
Although principal component analysis for data in $\RR^m$ can be defined in several different ways, the following description is natural for reformulating the procedure in tree-space. 
Suppose we have data $Z=\{z_1,z_2,\ldots,z_n\}$ where $z_i\in\RR^m$ for $i=1,\ldots,n$. 
For any set of $k+1$ points $V=\{v_0,v_1,\ldots,v_k\}\subset\RR^m$ we can define 
\begin{equation}\label{equ:def_euc_PC}
\Pi(V)=\left\{ \sum_{i=0}^k p_iv_i: p_0,\ldots,p_k\in\RR, p_0+\cdots+p_k=1 \right\}
\end{equation}
so that $\Pi(V)$ is the hyperplane in $\RR^m$ containing $v_0,v_1,\ldots,v_k$.  
The orthogonal $L^2$ distance of any point $y\in\RR^m$ from $\Pi(V)$ is denoted $d(y,\Pi(V))$, and the sum of squared projected distances of the data $Z$ onto the hyperplane is defined by
\begin{equation*}
D^2_{Z}\left(\Pi(V)\right) = \sum_{i=1}^n d(z_i,\Pi(V))^2.
\end{equation*}
Then the $k$-th order principal component $\Pi_k$ corresponds to a choice of $V$ which {\em minimizes} this sum. 
In $\RR^m$, $\Pi_0$ is the sample mean, $\Pi_1$ is the line through the sample mean which minimizes the sum of squared projected distances, and so on for $k=2,3,\ldots$. 
Although it is not explicit in the definition above, in $\RR^m$ the principal components are nested: 
\begin{equation}\label{equ:nested}
\Pi_0\subset\Pi_1\subset\Pi_2\subset\cdots.
\end{equation} 
This description of principal component analysis relies heavily on the vector space properties of $\RR^m$: $\Pi(V)$ is defined as a linear combination of vectors and the procedure uses orthogonal projection. 

However, the space of phylogenetic trees with $N+1$ leaves is not an Euclidean vector space. 
It follows that we cannot directly apply classical principal component analysis to a data set consisting of phylogenetic trees.  
The set $\treespace{N}$ of all phylogenetic trees with $N+1$ leaves labelled $0,1,\ldots,N$ is a so-called CAT(0) space \cite{BHV2001,BH1999}. 
This means that $\treespace{N}$ is a metric space with a unique geodesic (shortest length path) between any pair of points, with the geodesic computable in $O(N^4)$ steps \cite{OS2011}. 
Amongst other properties, projection onto closed sets is well-defined in CAT(0) spaces. 
The analogue of the zero-th order principal component is given by the {\em Fr\'echet mean} of the data $z_1,\ldots, z_n$. 
The Fr\'echet mean is a statistic which characterizes the central tendency of a distribution in arbitrary metric spaces. 
For any metric space $S$ equipped with metric $d(\cdot, \cdot)$, the Fr\'echet population mean, $\mu$, with respect to distribution $\nu$ is defined by 
\begin{equation*}
\mu(\nu) = \argmin_{y\in S} \int_S d(y,x)^2 d\nu(x).
\end{equation*}
The discrete analogue, the weighted Fr\'echet mean of a sample $Z=\{z_1,\ldots, z_n\}$ with respect to a weight vector $w$, is
\begin{equation*}
\mu(Z,w) = \argmin_{y\in S} \sum_{i=1}^n w_i\,d(y,z_i)^2,
\end{equation*}
where the weights $w_i$ satisfy $w_i\geq 0$ for $i = 1, \ldots, n$. 
In any CAT(0) space, $\mu(Z,w)$ is a well-defined unique point given data $Z$ and weight vector $w$. 
The definition of the zero-th order principal component $\Pi_0$ in $\RR^m$ given above coincides with the definition of the Fr\'echet sample mean with weights $w_i=1$ in any CAT(0) space. 
Several algorithms for computing the Fr\'echet sample mean in $\treespace{N}$ have been developed \cite{bacak2014,MOP2015} and we review these later in Section~\ref{sec:FMalgorithms} as they play an important role in our methodology. 

Methods for constructing a principal geodesic in tree-space, an analogue of $\Pi_1\subset\RR^m$ as defined above, have recently been developed. 
In \cite{nye2011}, the approach involved firing geodesics from some mean tree. 
For each candidate geodesic $\Gamma$, the sum of squared projected distances $D^2_Z(\Gamma)$ was computed and a greedy algorithm was used to adjust $\Gamma$ in order to mimimize $D^2_Z(\Gamma)$. 
The geodesics considered were infinitely long, but these have the disadvantage that in some cases many such geodesics fit the data equally well. 
Subsequent approaches therefore considered finitely long geodesic segments \cite{fer13,nye14b}. 
The geodesic segment between two points $v_0,v_1\in\treespace{N}$ is analogous to $\Pi(V)$ in equation~$\eqref{equ:def_euc_PC}$ with $k=1$, except that the weights $p_0,p_1$ must constrained to be positive under the analogy. 
\cite{fer13} constrained the ends of the geodesic to be points in the sample $Z$ and sought the corresponding geodesic $\Gamma$ which mimimizes $D^2_Z(\Gamma)$, whereas \cite{nye14b} did not restrict the geodesic and used a stochastic optimization algorithm to perform the minimization. 

In this paper we address two fundamental questions: (i) which geometric object most naturally plays the role of a $k$-th order principal component in tree-space; and (ii) given such an object, how can we efficiently project data points onto the object? 
Our proposed solution is to replace the definition of $\Pi(V)\subset\RR^m$ given in equation~$\eqref{equ:def_euc_PC}$ with the locus of the weighted Fr\'echet mean of points $v_0,v_1,\ldots,v_k$ in tree-space. 
Specifically, suppose $V=\{v_0,v_1,\ldots,v_k: v_i\in\treespace{N}, i=0,1,\ldots,k\}$ and define $\Pi(V)\subset\treespace{N}$ by
\begin{equation*}
\Pi(V) = \{ \mu(V,p) : p\in\simplex{k}\}
\end{equation*}
where $\simplex{k}$ is the $k$-dimensional simplex of probability vectors 
\begin{equation*}
\simplex{k} = \{(p_0,p_1,\ldots,p_k): p_i\geq 0, i=0,1,\ldots,k\ \text{and\ } \sum_ip_i=1\}
\end{equation*}
and $\mu(V,p)$ is the Fr\'echet mean of the points in set $V$ with weights $p$.
We call $\Pi(V)$ the {\em locus of the Fr\'echet mean} of $V$. 
Our choice of notation is intended to emphasize the analogy between the definition of $\Pi(V)$ in tree-space and the corresponding definition for $\RR^m$ in equation~$\eqref{equ:def_euc_PC}$. 
The locus of the Fr\'echet mean is a type of \emph{minimal surface}, as the following physical analogy suggests: imagine connecting a point $y\in\treespace{N}$ to points $v_0,v_1,\ldots,v_k\in\treespace{N}$ by $k+1$ pieces of elastic. 
When the point $y$ is free to move, it will move under the action of the elastic into an equilibrium position in tree-space. 
We can imagine how this equilibrium point changes as the stiffness in the pieces of elastic is varied, which corresponds to varying $p\in\simplex{k}$. 
As the equilibrium point moves around it scans out a surface in tree-space. 
In Euclidean space the locus of the Fr\'echet mean of some collection of points is a subset of a hyperplane. 
In tree-space, as we will show, the hyper-surface can be curved. 
Surfaces of this kind have recently been studied by \cite{penn16} in the context of Riemannian manifolds and other geodesic metric spaces. 
We discuss the relationship of the present paper to that work in Section~\ref{sec:discussion}. 

Our main theoretical results are as follows. 
First, when $V=\{v_0,v_1,\ldots,v_k\}$ we derive a set of local implicit equations for $\Pi(V)$.
These allow us to derive conditions for $\Pi(V)$ to be locally flat, and also enable us to construct explicit realizations of $\Pi(V)$ in certain interesting cases. 
Secondly, using the implicit equations we show the locus of the Fr\'echet mean $\Pi(V)$ in $\treespace{N}$ is locally $k$ dimensional for generic $V$, and thus forms a suitable candidate for a $k$-th order principal component. 
Third, we present an algorithm for projection onto $\Pi(V)$ which relies only on the CAT(0) properties of $\treespace{N}$. 
We demonstrate accuracy of the projection algorithm via a simulation study. 



The remainder of the paper is organized in the following way. 
Section \ref{review} reviews the fundamental concepts and results in phylogenetic tree-space: the construction of geodesics, calculation of the Fr\'echet mean and convex hulls. 
Section \ref{sec:LFM} studies the geometric features of the locus of the Fr\'echet mean, in particular establishing its dimension with Theorem~\ref{thm:LFM_dimension}. 
We draw the reader's attention to Section~\ref{sec:LFM_example} which contains an explicit example which illustrates geodesics in tree-space, convex hulls, and details of the calculation of dimension. 
Section \ref{sec:projection_and_pca} presents the algorithm for projecting sets of phylogenetic tree data onto the locus of the Fr\'echet mean of $k+1$ fixed trees and describes the algorithm used to fit these objects to the data.
This section also contains a simulation study to test the effectiveness of the algorithms. 
In Section~\ref{sec:results} we apply these methods to two data sets.
The first is a set of gene trees coming from fish and tetrapods intended to investigate the relationship of coelacanth and lungfish to the tetrapods, and the second is a set of gene trees from apicomplexa. 
We make some concluding remarks in Section~\ref{sec:discussion}.

\section{The geometry of tree-space}\label{review}

\subsection{Construction of tree-space and its geodesics}\label{sec:bhv}

Throughout the paper, the $m$-dimensional Euclidean vector space is denoted by $\RR^m$. 
The non-negative and positive orthants in ${\RR}^m$ are denoted by $\RR_{\geq 0}^m$ and $\RR_{>0}^m$, respectively. 
For any vectors $x,y\in {\RR}^m$, $||x||$ denotes the Euclidean norm of $x$, and $\langle x,y\rangle$ denotes the Euclidean inner product. 

A \emph{phylogenetic tree} with the leaf set $X=\{0,1,\ldots,N\}$ is an undirected weighted acyclic graph with $N+1$ degree-$1$ vertices labelled $0,1,\ldots,N$, and with no degree-$2$ verices. 
We consider rooted trees, and the root is the leaf labelled ``0''. 
Each such tree contains $N+1$ \emph{pendant} edges, which connect to the leaves, and up to $N-2$ \emph{internal edges}.  
The maximum number of internal edges is achieved when the tree is binary, in which case all vertices have degree $3$ other than the leaves, in which case the tree is called \emph{fully resolved}. 
If a tree contains fewer edges then it is called \emph{unresolved} and there must be at least one vertex with degree $\geq 4$. 
Each edge in a phylogeny is assigned a strictly positive weight (also called the edge \emph{length}).
Given a tree $x\in\treespace{N}$, the set of edges of $x$ is denoted $\edge{x}$, and the weight associated to $e\in\edge{x}$ is denoted $|e|_x$. 
It is convenient to define $|e|_x$ to be zero whenever $e$ is not contained in $x$.
Tree-space $\treespace{N}$ is the set of all phylogenetic trees with leaf set $X$ \cite{BHV2001}. 

Tree-space can be embedded in $\RR^M$ for $M={2^N-1}$ in the following way. 
If we cut any edge $e\in\edge{x}$ then the tree $x$ splits into two disconnected pieces. 
This determines a \emph{split} $X_e|{X}^c_e$ of the leaf set $X$, where $X_e\cup{X}^c_e=X$ and $X_e\cap{X}^c_e=\emptyset$. 
By convention we choose $X_e$ to be the set containing the root 0, and so there are $M=2^N-1$ possible splits of $X$. 
The collection of splits represented by a tree $x$ is called the \emph{topology} of $x$. 
Since edges and splits are equivalent, we use the notation $\edge{x}$ to also represent the set of splits in $x$. 
By choosing some arbitrary ordering of the set of all splits, each tree $x\in\treespace{N}$ can be represented as a vector in $\RR^M$ with up to $2N-1$ positive entries given by the edge weights of $x$, and zeros for each split which is not contained in $x$. 
However, an arbitrary choice of vector will not necessarily represent a tree: for example the splits $\{0,1\}|\{2,3,\ldots,N\}$ and $\{0,2\}|\{1,3,\ldots,N\}$ cannot both be contained in the same tree so any vector for which these splits both have strictly positive value does not represent a tree. 
Two splits $X_e|{X}^c_e, X_f|{X}^c_f$ are \emph{compatible} if one of the four sets $X_e \cap X_f$, ${X}^c_e \cap X_f$, $X_e \cap {X}^c_f$, ${X}^c_e \cap {X}^c_f$ is empty, in which case there is at least one tree containing both splits.  
Any collection of pairwise compatible splits determines a valid tree topology \cite[Theorem 3.1.4]{SS}.  

The embedding into Euclidean space reveals the combinatorial structure of $\treespace{N}$. 
Every tree $x\in\treespace{N}$ contains $N$ pendant edges and so $\treespace{N}$ is the product of $\RR^N_{>0}$ and a space corresponding to the internal edges. 
It is therefore convenient to ignore the pendant edges, and consider the corresponding embedding of tree-space into $\mathcal{R}_N = \RR^{M-N}$. 
Given any tree topology $\tau$ containing $m$ internal edges, the set of trees with topology $\tau$ corresponds to a subset $\orthant_\tau\subset\mathcal{R}_N$ which is isomorphic to $\RR^{m}_{>0}$.
(The isomorphism is with respect to the local Euclidean structure.)
Each such region is called the \emph{orthant} for topology $\tau$. 
The boundary of $\orthant_\tau$ in $\mathcal{R}_N$ corresponds to trees obtained by removing one or more internal edges from $\tau$. 
Equivalently, the trees on the boundary can be obtained by taking a tree $x$ in $\orthant_\tau$ and continuously shrinking one or more internal edges down to length zero. 
Thus, for a fully-resolved topology $\tau$, the codimension-$1$ boundaries of $\orthant_\tau$ correspond to trees containing $N-3$ internal edges, and in general each codimension-$k$ boundary corresponds to trees containing $N-k-2$ internal edges, for $k=1,\ldots,N-2$. 
There are $(2N-3)!!$ possible fully resolved rooted tree topologies and so $\treespace{N}$ is built from $(2N-3)!!$ orthants isomorphic to $\RR^{N-2}_{>0}$ together with the boundaries of these orthants which correspond to trees which are not fully resolved. 
Orthants are glued together at their boundaries since a given unresolved tree containing $m$ internal edges can be obtained by removing edges from several different trees containing $m+1$ edges. 
Orthants corresponding to fully-resolved topologies are glued at their codimension-$1$ boundaries in a relatively simple way. 
If a single internal edge in a tree with fully-resolved topology $\tau$ is contracted to length zero and removed from the tree, the result is a vertex of degree $4$. 
There are $3$ possible ways to add in an additional edge to give another fully resolved topology (including the original edge which was removed) so each codimension-$1$ face of $\orthant_\tau$ is glued to two other such orthants. 
Trees containing no internal edges are called \emph{star-trees}: the point $0\in\mathcal{R}_N$ corresponds to the set of star-trees and is contained in the boundary of every orthant $\orthant_\tau$. 

The topology of $\treespace{N}$ is taken to be that induced by the embedding into Euclidean space. 
Geodesics are constructed by considering continous paths in $\treespace{N}$ which are Euclidean straight-line segments in each orthant. 
The length of a path is the sum of the Euclidean segment lengths. 
As shown in \cite{BHV2001}, the shortest such path or \emph{geodesic} between two points $x,y\in\treespace{N}$ is unique, and it will be denoted $\Gamma(x,y)$. 
The distance $d(x,y)$ is defined to be the length of $\Gamma(x,y)$ and this defines the metric $d(\cdot,\cdot)$ on $\treespace{N}$. 
By definition, $d(x,y)$ incorporates information about both the topologies and edge lengths of $x$ and $y$. 
Given two points $x,y$ in the same orthant $\Gamma(x,y)$ is simply the Euclidean line segment between $x,y$, whereas when $x,y$ are in different orthants $\Gamma(x,y)$ consists of a series of straight line segments traversing orthants corresponding to different topologies.  
\cite{BHV2001} proved that $\treespace{N}$ is a CAT(0) space, and so it has several additional geometrical properties \cite{BH1999}. 

\cite{OS2011} established a polynomial time $(O(N^4))$ algorithm to compute the geodesic between any two trees in $\treespace{N}$. 
The details of their algorithm are not important for the present application, but we do require some notation for the form of the geodesics it constructs. 
Given $x,y\in\treespace{N}$ let $\mathcal{C}(x,y)$ be the set of splits in $\edge{x}\cup\edge{y}$ which are compatible with every split in $\edge{x}$ and every split in $\edge{y}$. 
Adopting notation from \cite{OS2011}, the geodesic $\Gamma(x,y)$ is characterized by disjoint sets of internal splits
\begin{align*}
A_{xy}^{(1)}, A_{xy}^{(2)}, \ldots, A_{xy}^{(\ell_{xy})} &\subset \edge{x},\quad\text{and}\\
B_{xy}^{(1)}, B_{xy}^{(2)}, \ldots, B_{xy}^{(\ell_{xy})} &\subset \edge{y}
\end{align*}
where $\ell_{xy}\geq0$ is an integer which depends on $x,y$. 
These sets of splits determine the order in which edges are removed and added as the geodesic is traversed. 
The union $\bigcup A_{xy}^{(j)}$ for $j=1\ldots,\ell_{xy}$ is $\edge{x}\setminus\mathcal{C}(x,y)$ and similarly for tree $y$. 
We let $\mathcal{A}(x,y)$ be the ordered list of sets $(A_{xy}^{(j)}:j=1,\ldots,\ell_{xy})$ and similarly define $\mathcal{B}(x,y)$.
The \emph{support} of $\Gamma(x,y)$ is defined to be the triple $(\mathcal{A}(x,y),\mathcal{B}(x,y),\mathcal{C}(x,y))$. 
It characterizes the sequence of orthants the geodesic traverses. 
For any set $E\subset\edge{x}$ we adopt the notation 
\begin{equation*}
\|E\|_x = \left( \sum_{e\in E} |e|_x^2\right)^{1/2}
\end{equation*}
and similarly for subsets of $\edge{y}$. 
\cite{OS2011} showed that
\begin{equation}\label{equ:geodlen}
d(x,y)^2 = \|A_{xy}+B_{xy}\|^2 + \|C_{xy}-D_{xy}\|^2
\end{equation}
where $A_{xy}$ is the $\ell_{xy}$-dimensional vector whose $j$-th element is $\|A_{xy}^{(j)}\|_x$ and similarly for $B_{xy}$ the $j$-th element is $\|B_{xy}^{(j)}\|_y$. 
Vectors $C_{xy}$ and $D_{xy}$ have dimension $|\mathcal{C}(x,y)|$, and respectively contain the edge lengths $|e|_x$ and $|e|_y$ for $e\in \mathcal{C}(x,y)$. 
It follows from equation~$\eqref{equ:geodlen}$ that
\begin{equation}\label{equ:geod_dist}
d(x,y)^2 = \|x\|^2+\|y\|^2+2\langle A_{xy}, B_{xy} \rangle-2\langle C_{xy}, D_{xy} \rangle
\end{equation} 
where $\|x\|^2$ is the sum of squared edge lengths in $x$ and similarly for $y$. 

The following definition characterizes certain geodesics which behave rather like Euclidean straight lines. 

\begin{definition}\label{def:simple}
Suppose $x,y\in\treespace{N}$ are fully resolved. 
The geodesic $\Gamma(x,y)$ is called \emph{simple} if each set $A_{xy}^{(i)}$ and $B_{xy}^{(i)}$ contains exactly one element for $i=1,\ldots,\ell_{xy}$. 
Equivalently, $\Gamma(x,y)$ is simple if and only if at most one edge length contracts to zero at a time as the geodesic is traversed. 
\end{definition}

The following definition determines the set of trees $y$ such that the geodesics $\Gamma(x,y)$ to a fixed point $x$ all share the same support. 

\begin{definition}
Fix some point $x\in\treespace{N}$ and an orthant $\orthant_\tau$ corresponding to a fully-resolved topology $\tau$. 
Given any valid support $\sigma$, the set 
\begin{equation*}
S_x(\sigma,\tau)=\{y\in\orthant_\tau:\Gamma(x,y)\text{\ has support\ }\sigma\}
\end{equation*}
is called a \emph{support region}. 
\end{definition}

\cite{MOP2015} considered very similar subsets of $\treespace{N}$ and established their properties.
Given $x$ and $\tau$ there are only finitely many possible supports.  
We will use the fact that the union over the set of possible supports
\begin{equation*}
\bigcup_{\sigma} S_x^\circ(\sigma,\tau)
\end{equation*}
is dense in $\orthant_\tau$, where $S_x^\circ(\sigma,\tau)$ denotes the interior of each support region. 
The boundaries between the support regions are continuous codimension-$1$ surfaces within each orthant.

\subsection{Algorithms for computing the Frech\'et mean}\label{sec:FMalgorithms}

Several algorithms for computing the unweighted or weighted Fr\'echet mean of a sample in $\treespace{N}$ have been developed \cite{sturm2003probability,MOP2015,bacak2014}. 
These algorithms have the following general structure. 
Let the sample of trees be $Z=\{z_1,\ldots,z_n\}\subset\treespace{N}$. 
At the $i$-th iteration there is an estimate $\mu_i$ of the Fr\'echet mean of $Z$. 
To find the next estimate, $\mu_{i+1}$, a data point $z_j$ is selected, either deterministically or stochastically depending on the particular algorithm. 
The geodesic $\Gamma(\mu_i, z_j)$ is constructed, and $\mu_{i+1}$ is taken to be the point a certain proportion of the distance along the geodesic. 
This proportion can depend on the weights when the weighted Fr\'echet mean is estimated. 
In each case, some form of convergence of the sequence $\mu_0,\mu_1,\mu_2,\ldots$ to the Fr\'echet mean of $Z$ can be proved, independent of the initial estimate $\mu_0$. 

Our methodology does not make direct use of these algorithms. 
However, our proposed algorithm for projecting data onto the locus of the Fr\'echet mean is adapted from the algorithm of \cite{sturm2003probability} (see Section~\ref{sec:projection}), and so we present Sturm's algorithm here. 
The algorithm computes the Fr\'echet mean of $z_1,\ldots,z_n$ using weights $p_1,\ldots,p_n\geq0$. 
By definition, the Fr\'echet mean is invariant under positive scaling of the weights, so we can take $\sum p_i=1$. 
Sturm's algorithm proceeds in the following way.

\begin{algo}\label{alg:sturm}
Sturm's algorithm for the weighted Fr\'echet mean.
\begin{tabbing}
Fix an initial estimate $\mu_0$ and set $i=0$. \\
\textbf{Repeat}: \\
   \qquad 1. Sample $Z_i\in\{z_1,\ldots,z_n\}$ such that $\pr{Z_i=z_j}=p_j$.\\
   \qquad 2. Construct $\Gamma(\mu_i, Z_i)$.\\
   \qquad 3. Let $\mu_{i+1}$ be the point a proportion $s_i$ along $\Gamma(\mu_i, Z_i)$ where $s_i=1/(i+2)$.\\
   \qquad 4. Set $i\leftarrow i+1$. \\
\textbf{Until} the sequence $\mu_0,\mu_1,\ldots$ converges.
\end{tabbing}
\end{algo}

Convergence can be tested in various ways, for example repeating until a specified number of consecutive estimates $\mu_i$ all lie within distance $\epsilon$ of each other. 
Sturm proved that the points $\mu_i$ converge in probability to the Fr\'echet mean of the distribution defined by sampling $z_1,\ldots,z_n$ according to probabilities $p_1,\ldots,p_n$.  

The deterministic algorithm of \cite{bacak2014} for computing the weighted Fr\'echet mean is similar to Sturm's algorithm, except the data points are used cyclically, as opposed to being randomly sampled as the algorithm progresses, and the weighting is instead taken into account in the definition of the proportions $s_i$.   
We use the algorithm of \cite{bacak2014} for computing the Fr\'echet mean in order to test our projection algorithm, and this procedure is also described in Section~\ref{sec:projection}. 

\subsection{Convex hulls}\label{sec:convex}

\cite{nye14b} suggested that the convex hull of $k+1$ points in $\treespace{N}$ might be a suitable geometrical object to represent a $k$-th order principal component. 
A set $A\subset\treespace{N}$ is \emph{convex} if and only if for all points $x,y\in A$ the geodesic $\Gamma(x,y)$ is also contained in $A$. 
The \emph{convex hull} of a set of points is the smallest convex set containing those points. 
Any geodesic segment is the convex hull of its end-points, and using the convex hull of $3$ points to represent a second order principal component is a natural generalization of the idea of a principal geodesic. 
Convexity is also a desirable property when performing projections, as occurs in a principal component analysis. 
However, convex hulls in tree-space do not have the correct dimension. 
Examples for which the convex hull of $3$ points is $3$ dimensional can readily be constructed \cite{wey_thesis, owen_convex}. 
It is demonstrated in \cite{LSTY2015} that the dimension of a convex hull of $3$ points in $\treespace{N}$ can be arbitrarily high as $N$ increases. 
More generally, convex hulls in tree-space are difficult to characterize geometrically with several fundamental questions unanswered. 
These issues make convex hulls less appealing as geometrical objects to represent principal components, and at this point we turn our attention to the locus of the Fr\'echet mean. 
However we demonstrate the relationship between the locus of the Fr\'echet mean and convex hull for an explicit configuration of $3$ points $v_0,v_1,v_2\in\treespace{N}$ later in Section~\ref{sec:LFM_example}.

\section{The locus of the Fr\'echet mean}\label{sec:LFM}

\subsection{Basic properties}\label{sec:props}

Throughout this section we work with $k+1$ points $v_0,v_1,\ldots,v_k\in\treespace{N}$ and let $V=\{v_0,v_1,\ldots,v_k\}$. 
As in the Introduction, we define $\mu:(\treespace{N})^{k+1}\times\simplex{k}\rightarrow\treespace{N}$ by
\begin{equation*}
\mu(V,p) = \argmin_{x\in \treespace{N}} \sum_{i=0}^k p_i\,d(x,v_i)^2.
\end{equation*}
The locus of the Fr\'echet mean of $V$, denoted $\Pi(V)\subset\treespace{N}$, is 
\begin{equation*}
\Pi(V) = \{ \mu(V,p) : p\in\simplex{k}\}.
\end{equation*}

Here we establish some basic properties of $\Pi(V)$, while the next section presents a more detailed analysis of $\Pi(V)$ within orthant interiors. 
First, the map $\mu$ is continuous and so $\Pi(V)$ is compact since it is the continuous image of a compact set. 
Continuity of $\mu$ can be proved using the deterministic algorithm for calculating the weighted Fr\'echet mean given by \cite{bacak2014}: the output of the algorithm depends continuously on the inputs $V$ and $p$. 
(We do not give a detailed proof for reasons of brevity.)
Secondly, the points $v_0,\ldots,v_k$ are contained in $\Pi(V)$, since  $\mu(V,e_i)=v_i$ where $e_i$ denotes the $i$-th standard basis vector in $\simplex{k}$.
Similarly each geodesic $\Gamma(v_i,v_j)$ is contained in $\Pi(V)$, by taking $p$ to be a convex combination of $e_i$ and $e_j$. 
By the same argument, $\Pi(V)$ contains $\Pi(W)$ where $W$ is any non-empty subset of $V$.
  
In Euclidean space the convex hull of $k+1$ points coincides with the locus of the Fr\'echet mean of the points. 
However, this is not the case in tree-space, though $\Pi(V)$ is contained in the closure of the convex hull of $V$. 
This follows because any point in $\Pi(V)$ can be  approximated arbitrarily closely by performing a finite number of steps of the algorithm of~\cite{bacak2014} (see Section~\ref{sec:FMalgorithms}). 
Provided the algorithm is initialized with one of the points $v_0,\ldots,v_k$, each of these steps remains within the convex hull, and so the limit point is contained in the closure of the convex hull. 
It is important to note that $\Pi(V)$ is itself generally \emph{not} convex. 
As a consequence, there might not be a unique closest point on $\Pi(V)$ to any given point $z$, although the minimum distance of $z$ from $\Pi(V)$ is well-defined. 
By using $\Pi(V)$ as a principal component we have therefore lost the desirable property of uniqueness of projection.
 
Fr\'echet means in tree-space exhibit a property called \emph{stickiness} \cite{hotz13}. 
This essentially means that for fixed $V$ the map $\mu(V,\cdot):\simplex{k}\rightarrow\treespace{N}$ can fail to be injective. 
Specifically, depending on the points in $V$, there might exist open sets in $\simplex{k}$ which all map to the same point in tree-space. 
This has implications when we project data points onto $\Pi(V)$: given a data point $z$, the value of $p$ which minimizes $d(z,\mu(V,p))^2$ might be non-unique, even if there is a unique closest point $x\in\Pi(V)$ to $z$.

\subsection{Implicit equations for the locus of the Fr\'echet mean}

The algebraic form of tree-space geodesics described in Section~\ref{sec:bhv} can be used to derive implicit equations for the edge lengths of trees lying on the locus of the Fr\'echet mean $\Pi(V)$, and these equations are fundamental to establishing the dimension of $\Pi(V)$. 
For fixed $V=\{v_0, \ldots, v_k\}$ consider the objective function $\Omega:\treespace{N}\times\simplex{k}\rightarrow\RR$ defined by
\begin{equation*}
\Omega(x,p) = \sum_{i=0}^k p_i\,d(x,v_i)^2.
\end{equation*}
Suppose we fix an orthant $\orthant_\tau$ for a fully resolved topology $\tau$. 
Let $x\in\orthant_\tau$ have edge lengths $x_j=|e_j|_x$ where $e_j\in\edge{x}$ for $j=1,\ldots,2N-1$. 
\cite{MOP2015} showed that functions of the form $d(x,v_i)^2$ are continuously differentiable on $\orthant_\tau$ with respect to the edge lengths $x_j$. 
In order to minimize $\Omega$ we additionally assume $x$ lies in a set
\begin{equation*}
S = \bigcap_{i=0}^k S_{v_i}^\circ(\sigma_i,\tau)
\end{equation*}
for some choice of supports $\sigma_0,\ldots,\sigma_k$. 
We call sets of this form \emph{mutual support regions} with respect to $v_0,\ldots,v_k$. 
They are dense in $\orthant_\tau$ using the properties of support regions given in Section~\ref{sec:bhv}. 
Each mutual support region is essentially a piece of tree-space for which the combinatorics of the geodesics to $v_0,\ldots,v_k$ do not vary as a reference point moves around the region. 
An example of a decomposition of orthants into mutual support regions is given in Section~\ref{sec:LFM_example}. 
Under this assumption on $x$, we can write down the algebraic form of $d(x,v_i)^2$ using equation~$\eqref{equ:geod_dist}$ to give 
\begin{align}
\Omega(x,p) &= \|x\|^2 +\sum_{i=0}^kp_i\,\left( \|v_i\|^2+2\langle A_{xv_i}, B_{xv_i} \rangle-2\langle C_{xv_i}, D_{xv_i} \rangle \right)\nonumber\\
\intertext{so}
\frac{\partial\Omega}{\partial x_j} &= 2x_j+2\sum_{i=0}^kp_i\,\frac{\partial}{\partial x_j}
\left( \langle A_{xv_i}, B_{xv_i} \rangle-\langle C_{xv_i}, D_{xv_i} \rangle \right).
\label{equ:diff_omega}
\end{align}
If the point $x\in S$ lies on the locus of the Fr\'echet mean $\Pi(V)$ then $\partial\Omega/\partial x_j=0$ for all $j$, and so we want to evaluate these derivatives to obtain implicit equations relating the edge lengths $x_j$ to the vector $p$.  

Let $y$ be any of the trees $v_0,\ldots,v_k$. 
By definition
\begin{align*}
\langle C_{xy}, D_{xy} \rangle &= \sum_{e\in\mathcal{C}(x,y)}|e|_x|e|_y\\
\intertext{so}
\frac{\partial}{\partial x_j}\langle C_{xy}, D_{xy} \rangle &= |e_j|_{y}
\end{align*}
since $x_j$ is the length of split $e_j$ and so the derivative of $\langle C_{xy}, D_{xy} \rangle$ is just a constant. 
The term $\langle A_{xy}, B_{xy} \rangle$ has a more general functional dependence on $x_j$. 
By definition
\begin{align*}
\langle A_{xy}, B_{xy} \rangle &= \sum_{l=1}^{\ell_{xy}}\| A_{xy}^{(l)} \|_x\| B_{xy}^{(l)}  \|_{y}\\
&= \sum_{l=1}^{\ell_{xy}}\left( \sum_{e\in A_{xv_i}^{(l)} }|e|^2_x \right)^{1/2}
\left( \sum_{f\in B_{xy}^{(l)} }|f|^2_{y} \right)^{1/2}.
\end{align*}
For any edge $e_j\in\mathcal{C}(x,y)$ this expression does not depend on $x_j$ so the derivative is zero. 
When $e_j\in\edge{x}\setminus\mathcal{C}(x,y)$ only the first term in brackets will depend on $x_j$. 
Since the sets $A_{xy}^{(l)}$ are disjoint it must be the case that $e_j$ is contained in exactly one set and we define $r_{ij}$ to be the index $l$ of that set when $y=v_i$. 
Then 
\begin{equation*}
\frac{\partial}{\partial x_j}\langle A_{xv_i}, B_{xv_i} \rangle =
\|B_{xv_i}^{(r_{ij})}\|\frac{\partial}{\partial x_j}\left( \sum_{e\in A_{xv_i}^{(r_{ij})}}| e |^2_x \right)^{1/2}
= x_j\frac{\|B_{xv_i}^{(r_{ij})}\|}{\|A_{xv_i}^{(r_{ij})}\|}.
\end{equation*}
In the case that $A_{xv_i}^{(r_{ij})}$ contains only $e_j$ and no other splits, we have $\|A_{xv_i}^{(r_{ij})}\|=x_j$ so the expression becomes
\begin{equation*}
\frac{\partial}{\partial x_j}\langle A_{xv_i}, B_{xv_i} \rangle = \|B_{xv_i}^{(r_{ij})}\|
\end{equation*}
which is a constant.  
Substituting these expressions into equation~$\eqref{equ:diff_omega}$ gives 
\begin{equation}\label{equ:omega_deriv}
\frac{\partial\Omega}{\partial x_j} = 2x_j + 2\sum_{i=0}^kp_i\,\left( x_j\frac{\|B_{xv_i}^{(r_{ij})}\|}{\|A_{xv_i}^{(r_{ij})}\|}(1-\mathcal{C}_{ij}) - |e_j|_{v_i}\mathcal{C}_{ij}\right)
\end{equation}
where $\mathcal{C}_{ij}=1$ if $e_j\in\mathcal{C}(x,v_i)$ and is zero otherwise.  

We define $F: \orthant_\tau\times\simplex{k}\rightarrow\RR^{2N-1}$ by
\begin{equation}\label{equ:def_F}
F(x, p) = \nabla_x\Omega(x,p).
\end{equation}
This function is continuously differentiable with respect to the edge lengths for all $x$ lying within the interior of mutual support regions. 
On the boundary between mutual support regions $F$ is continuous but may not be differentiable.
In section~\ref{sec:LFM_dimension} we show the matrix of second derivatives of $\Omega$ is positive definite on each mutual support region, and so every solution to $\nabla_x\Omega=0$ is a minimum. 
It follows that $\Pi(V)$ is locally the solution to $F(x,p)=0$.

The following lemma establishes conditions for $\Pi(V)$ to be a hyperplane within the mutual support region $S\subset\orthant_\tau$. 

\begin{lemma}\label{lem:planar}
If the supports $\sigma_0,\ldots,\sigma_k$ are such that the geodesics $\Gamma(x,v_i)$ are simple for all $i=0,\ldots,k$ (in the sense of definition~\ref{def:simple}) then $\Pi(V)$ is a hyperplane of dimension $k$ or lower in $S = \bigcap_i S_{v_i}^\circ(\sigma_i,\tau)$. 
\end{lemma}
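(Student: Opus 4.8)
The plan is to use the explicit derivative formula~\eqref{equ:omega_deriv} and show that under the simplicity hypothesis the system $F(x,p)=0$ becomes \emph{linear} in the edge lengths $x_j$ for each fixed $p$, so that its solution set inside the mutual support region $S$ is (the intersection with $S$ of) an affine subspace of dimension at most $k$. First I would classify the coordinate index $j\in\{1,\ldots,2N-1\}$ according to the behaviour of the splits $e_j$ relative to the geodesics $\Gamma(x,v_i)$. For the $N$ pendant edges the factorization of tree-space means $\Omega$ restricted to those coordinates is just $\sum_i p_i\|\cdot - (\text{pendant part of }v_i)\|^2$, whose minimizer is the $p$-weighted average of the pendant edge lengths of the $v_i$; this already traces out a $k$-dimensional (or lower) affine set linearly in $p$, and is harmless.

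The key observation is what simplicity does to equation~\eqref{equ:omega_deriv}. If every $A^{(l)}_{xv_i}$ is a singleton, then whenever $e_j\notin\mathcal C(x,v_i)$ the set $A^{(r_{ij})}_{xv_i}$ equals $\{e_j\}$, so $\|A^{(r_{ij})}_{xv_i}\|=x_j$ and the term $x_j\|B^{(r_{ij})}_{xv_i}\|/\|A^{(r_{ij})}_{xv_i}\|$ collapses to the \emph{constant} $\|B^{(r_{ij})}_{xv_i}\|$, which on the fixed mutual support region $S$ depends only on the (fixed) combinatorics and on the edge lengths of $v_i$ — not on $x$. Thus, on $S$, $\partial\Omega/\partial x_j$ is of the form $2x_j + 2\sum_i p_i c_{ij}$ where each $c_{ij}$ is a constant (namely $\|B^{(r_{ij})}_{xv_i}\|$ when $e_j\notin\mathcal C(x,v_i)$, and $-|e_j|_{v_i}$ when $e_j\in\mathcal C(x,v_i)$). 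Setting these to zero gives, for each internal coordinate,
\begin{equation*}
x_j = -\sum_{i=0}^k p_i\,c_{ij},
\end{equation*}
an explicit affine function of $p\in\simplex{k}$. Hence $\Pi(V)\cap S$ is contained in the image of $\simplex{k}$ under a single affine map $p\mapsto x(p)$, so it lies in an affine subspace of $\orthant_\tau$ of dimension $\le k$; combined with the claim (to be proved in Section~\ref{sec:LFM_dimension}) that the Hessian of $\Omega$ is positive definite on $S$, every such critical point is the genuine minimizer $\mu(V,p)$, so $\Pi(V)\cap S$ equals this affine set intersected with $S$, which is what is meant by "$\Pi(V)$ is a hyperplane of dimension $k$ or lower in $S$."

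The main obstacle is not the algebra but the bookkeeping needed to be sure the coefficients $c_{ij}$ are truly constant on all of $S$: one must check that the index $r_{ij}$ and the split-set $B^{(r_{ij})}_{xv_i}$ are determined by the support $\sigma_i$ alone (constant on $S$ by definition of a support region) and that $\|B^{(r_{ij})}_{xv_i}\|$ involves only edge lengths of $v_i$, which are fixed — all of which follows directly from the description of the $A^{(l)}, B^{(l)}$ sets and the support in Section~\ref{sec:bhv}, but should be stated carefully. A secondary point worth a sentence is that the hypothesis constrains the $A$-sides to be singletons via simplicity; since $x$ ranges over $S$ while the $v_i$ are fixed, one should note that Definition~\ref{def:simple}'s two equivalent conditions apply with $x$ playing the role of the fully resolved endpoint whose edges contract, so it is exactly the $A^{(l)}_{xv_i}$ that we need to be singletons, which is what the formula uses. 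Once these are in place the dimension bound is immediate from the affine-map argument, and one defers the "generically exactly $k$" refinement and the positive-definiteness of the Hessian to the next section as the statement permits.
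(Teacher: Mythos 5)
Your argument is correct and is essentially the paper's own proof: simplicity makes each $A^{(r_{ij})}_{xv_i}$ a singleton so the nonlinear term in equation~$\eqref{equ:omega_deriv}$ collapses to a constant, the stationarity equations become $x_j=-\sum_i p_i c_{ij}$, and the locus is the image of an affine map on $\simplex{k}$, hence of dimension at most $k$. The extra bookkeeping you flag (constancy of $r_{ij}$ and $\|B^{(r_{ij})}_{xv_i}\|$ on $S$, and the positive-definite Hessian identifying critical points with minimizers) is exactly what the paper relies on implicitly via the definition of support regions and the lemma of Section~\ref{sec:LFM_dimension}.
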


\begin{proof}
If all the geodesics $\Gamma(x,v_i)$ are simple for $x\in S$ then each set $A_{xv_i}^{(l)}$ contains exactly one split. 
Then equation~$\eqref{equ:omega_deriv}$ becomes
\begin{equation*}
\frac{\partial\Omega}{\partial x_j} = 2x_j + 2\sum_{i=0}^kp_i\alpha_{ij}
\end{equation*}
for some constants $\alpha_{ij}$. 
Solving $F(x,p)=0$ gives each edge length $x_j$ as a linear combination of $p_0,p_1,\ldots,p_k$, which establishes the result. 
Generically, $\Pi(V)$ is therefore locally a $k$-dimensional hyperplane, but the dimension may be lower. 
Further discussion about this point is given in Section~\ref{sec:LFM_dimension}. 
\end{proof}

\subsection{The dimension of the locus of the Fr\'echet mean}\label{sec:LFM_dimension}

We aim to prove that $\Pi(V)$ has dimension $k$ in each mutual support region. 
The strategy is to first show that the matrix of second derivatives of $\Omega$, or equivalently the matrix with elements $\partial F_j/\partial x_k$ with $F$ defined by equation~$\eqref{equ:def_F}$, is positive definite. 
Calculation of the dimension of $\Pi(V)$ follows by applying the implicit function theorem. 

\begin{lemma}
The matrix with elements $\partial F_j/\partial x_k$ is positive definite for all $x$ in the mutual support region $S$. 
\end{lemma}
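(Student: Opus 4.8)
The plan is to recognise the matrix with entries $\partial F_j/\partial x_k$ as the Hessian $\nabla^2_x\Omega(x,p)$ of the objective function with respect to the edge lengths $x_1,\ldots,x_{2N-1}$, and to exhibit it as the sum of a strictly positive definite part and a positive semidefinite part. The key structural fact I would use is that throughout a mutual support region $S=\bigcap_{i=0}^k S_{v_i}^\circ(\sigma_i,\tau)$ the combinatorial data of every geodesic $\Gamma(x,v_i)$ — the partition of $\edge{x}\cup\edge{v_i}$ into the sets $A_{xv_i}^{(l)}$, $B_{xv_i}^{(l)}$ and $\mathcal{C}(x,v_i)$, and hence the indices $r_{ij}$ and the flags $\mathcal{C}_{ij}$ — is constant, so that $\Omega$ has the explicit algebraic form written above equation~$\eqref{equ:diff_omega}$ and may be differentiated term by term.

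First I would split $\Omega$ into its three natural pieces. The term $\|x\|^2=\sum_j x_j^2$ contributes exactly $2I$ to the Hessian. Each term $\langle C_{xv_i},D_{xv_i}\rangle=\sum_{e\in\mathcal{C}(x,v_i)}|e|_x|e|_{v_i}$ is linear in the edge lengths of $x$ on $S$, so it contributes nothing to the second derivatives. This leaves the terms $2\langle A_{xv_i},B_{xv_i}\rangle=2\sum_{l=1}^{\ell_{xv_i}}\|B_{xv_i}^{(l)}\|_{v_i}\,\|A_{xv_i}^{(l)}\|_x$, in which each coefficient $\|B_{xv_i}^{(l)}\|_{v_i}$ is a nonnegative constant on $S$ and $\|A_{xv_i}^{(l)}\|_x$ is the Euclidean norm of the subvector of $x$ indexed by the (internal, hence strictly positive on $\orthant_\tau$) splits in $A_{xv_i}^{(l)}$.

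The decisive observation is that $a\mapsto\|a\|$ is convex, with Hessian $\|a\|^{-1}\bigl(I-aa^{\T}/\|a\|^2\bigr)$ for $a\neq0$, which is positive semidefinite; since $x$ lies in the interior of $S$ none of the norms $\|A_{xv_i}^{(l)}\|_x$ vanishes, so these Hessians are well-defined. Viewed as a matrix on all $2N-1$ edge-length coordinates, the Hessian of $\|A_{xv_i}^{(l)}\|_x$ is this positive semidefinite block supported on the indices in $A_{xv_i}^{(l)}$ and zero elsewhere, and one can check entry by entry that this matches a second differentiation of equation~$\eqref{equ:omega_deriv}$. Because the weights $p_i$ and the coefficients $\|B_{xv_i}^{(l)}\|_{v_i}$ are all nonnegative, the total contribution of the $\langle A_{xv_i},B_{xv_i}\rangle$ terms is a nonnegative combination of positive semidefinite matrices, hence positive semidefinite. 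Adding the $2I$ coming from $\|x\|^2$ gives $\nabla^2_x\Omega(x,p)\succeq 2I$, which is positive definite, as required.

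I do not anticipate a serious obstacle: the argument is, at bottom, convexity of the Euclidean norm together with the fact that $\Omega$ is a nonnegatively weighted sum of the functions $d(\cdot,v_i)^2$. The only points requiring care are confirming that the geodesic combinatorics really are locally constant on $S$ (so that the closed-form expression for $\Omega$, and term-by-term differentiation, are legitimate) and that working in the interior of $S$ keeps every $\|A_{xv_i}^{(l)}\|_x$ positive; both follow from the properties of support regions and mutual support regions recalled in Section~\ref{sec:bhv}.
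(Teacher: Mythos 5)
Your proposal is correct and follows essentially the same route as the paper: both decompose the Hessian of $\Omega$ into the $2I$ coming from $\|x\|^2$, a vanishing contribution from the linear terms $\langle C_{xv_i},D_{xv_i}\rangle$, and a nonnegative combination of block matrices of the form $\tfrac{b}{a}\bigl(I-aa^{\T}/\|a\|^2\bigr)$ supported on the index sets $A_{xv_i}^{(l)}$. The only difference is cosmetic: the paper verifies positive semidefiniteness of each block by applying the Cauchy--Schwarz inequality to the quadratic form, whereas you identify each block as the Hessian of the convex map $a\mapsto\|a\|$ (an orthogonal projection up to a positive scalar), which is the same fact packaged slightly more cleanly and also absorbs the $\mathcal{C}_{ij}=1$ case without the paper's separate "drop the assumption" step.
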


\begin{proof}
Using equation~$\eqref{equ:omega_deriv}$ we have
\begin{equation*}
\frac{\partial F_j}{\partial x_k} = 2\delta_{jk}+2\sum_{i=0}^kp_iQ_{jk}^{(i)}
\end{equation*}
where the matrix $Q^{(i)}$ has elements
\begin{equation*}
Q_{jk}^{(i)} = \frac{\partial}{\partial_k}\left( x_j\frac{\|B_{xv_i}^{(r_{ij})}\|}{\|A_{xv_i}^{(r_{ij})}\|}(1-\mathcal{C}_{ij}) - |e_j|_{v_i}\mathcal{C}_{ij}\right). 
\end{equation*}
We start by assuming $\mathcal{C}_{ij}=0$ for all $i,j$ (so that $e_i\notin\mathcal{C}(x,v_i)$ for all $i,j$), and drop this assumption later. 
Then
\begin{align*}
Q_{jk}^{(i)} &= \delta_{jk}\frac{\|B_{xv_i}^{(r_{ij})}\|}{\|A_{xv_i}^{(r_{ij})}\|}-x_j\frac{\|B_{xv_i}^{(r_{ij})}\|}{\|A_{xv_i}^{(r_{ij})}\|^2}\,\frac{\partial}{\partial x_j}\,\|A_{xv_i}^{(r_{ij})}\|\\
&= \delta_{jk}\frac{\|B_{xv_i}^{(r_{ij})}\|}{\|A_{xv_i}^{(r_{ij})}\|}
-x_jx_kI^{(i)}_{jk}\frac{\|B_{xv_i}^{(r_{ij})}\|}{\|A_{xv_i}^{(r_{ij})}\|^3}
\end{align*}
where $I^{(i)}$ is a $(2N-1)\times(2N-1)$ dimensional matrix with $I^{(i)}_{jk}=1$ whenever $e_k$ is contained in $A^{(r_{ij})}_{xv_i}$ and zero otherwise. 
Equivalently $I^{(i)}$ indicates whether splits $e_j$ and $e_k$ are simultaneously contracted to zero on $\Gamma(x,v_i)$. 
We will show that the matrices $Q^{(i)}$ are positive semi-definite. 
For any vector $\xi\in\RR^{2N-1}$ we have
\begin{equation}\label{equ:Q_pos}
\sum_{j,k}Q^{(i)}_{jk}\xi_j\xi_k = \sum_j \frac{b_j}{a_j}\xi_j^2-\sum_{j,k}\xi_j\xi_kx_jx_k\frac{b_j}{a_j^3}I^{(i)}_{jk}
\end{equation}
where $a_j=\|A_{xv_i}^{(r_{ij})}\|$ and $b_j=\|B_{xv_i}^{(r_{ij})}\|$. 
Now fix a single set of splits $A_{xv_i}^{(l)}$ and let $J_l$ denote the indices of splits in this set. 
If we restrict the right-hand side of the last equation to indices $j\in J_l$ we obtain
\begin{equation*}
\sum_{j\in J_l}\frac{b_j}{a_j}\xi_j^2 - \sum_{j,k\in J_l}\xi_j\xi_kx_jx_k\frac{b_j}{a_j^3}.
\end{equation*}
The terms $a_j$ adopt the same value for all $j\in J_l$, and similarly for $b_j$, so they are independent of the summation index in the last expression.
Also for $j\in J_l$
\begin{equation*}
a_j^2  = \|A_{xv_i}^{(l)}\|_x = \sum_{m\in J_l} x_m^2.
\end{equation*}
Then
\begin{equation*}
\sum_{j,k\in J_l}Q^{(i)}_{j,k}\xi_j\xi_k = \sum_{j\in J_l} \frac{b_j}{a_j}\xi_j^2-\frac{\sum_{j,k\in J_l}\xi_j\xi_kx_jx_k}{\sum_{m\in J_l}x_m^2}\frac{b_j}{a_j}.
\end{equation*}
The Cauchy-Schwartz inequality shows the right-hand side is $\geq 0$ since $a_j,b_j$ are constant over this range of $j$. 
It follows that the right-hand side of equation~$\eqref{equ:Q_pos}$ is $\geq 0$, so each matrix $Q^{(i)}$ is positive semi-definite.
If we drop the assumption that $\mathcal{C}_{ij}=0$ for all $i,j$, this introduces rows and columns of zeros into each matrix $Q^{(i)}$. 
However, the matrices $Q^{(i)}$ must therefore remain positive semi-definite, and this establishes the calim in the statement of the proof. 
\end{proof}

\begin{theorem}\label{thm:LFM_dimension}
Within the mutual support region $S$, the locus of the Fr\'echet mean $\Pi(V)$ is a submanifold of dimension $k$ or lower. 
For generic selections of the points $v_0,\ldots, v_k$ the dimension is $k$. 
\end{theorem}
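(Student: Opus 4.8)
The plan is to deduce Theorem~\ref{thm:LFM_dimension} from the preceding lemma by a direct application of the implicit function theorem to the map $F$ of equation~\eqref{equ:def_F}. Recall that on the mutual support region $S\subset\orthant_\tau$ the locus $\Pi(V)$ is locally the solution set of $F(x,p)=0$, where $F:\orthant_\tau\times\simplex{k}\to\RR^{2N-1}$ and $F=\nabla_x\Omega$. The previous lemma shows $\partial F_j/\partial x_k$ is the matrix $2I+2\sum_i p_i Q^{(i)}$ which is positive definite (the identity contributes $2I$, each $Q^{(i)}$ is positive semi-definite), hence invertible, and $F$ is continuously differentiable on the interior of $S$. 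So first I would fix a point $(x^*,p^*)$ with $x^*\in S^\circ$ and $F(x^*,p^*)=0$, and apply the implicit function theorem with $x$ as the dependent variable: since $D_xF(x^*,p^*)$ is invertible, there is a neighbourhood of $p^*$ in $\simplex{k}$ and a $C^1$ map $p\mapsto x(p)$ with $F(x(p),p)=0$, and locally the zero set of $F$ is exactly the graph of this map. The graph of a $C^1$ map from (an open subset of) $\simplex{k}$, which has dimension $k$, is a $C^1$ submanifold of dimension \emph{at most} $k$; it has dimension exactly $k$ at $p^*$ iff the differential $D_p\,x(p^*)$ has rank $k$, i.e. iff the $k+1$ columns corresponding to a basis of the tangent space of the simplex are linearly independent. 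This gives the first sentence of the theorem immediately (dimension $k$ or lower), and reduces the second sentence to a rank computation.

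For the generic statement, the differential of the implicit map is $D_p\,x = -\,(D_xF)^{-1}\,D_pF$, so I need $D_pF$ to have rank $k$ on the tangent space of $\simplex{k}$ for generic $V$. From equation~\eqref{equ:omega_deriv}, $F_j = 2x_j + 2\sum_{i=0}^k p_i g_{ij}(x)$ where $g_{ij}(x) = x_j\|B_{xv_i}^{(r_{ij})}\|/\|A_{xv_i}^{(r_{ij})}\|\,(1-\mathcal C_{ij}) - |e_j|_{v_i}\mathcal C_{ij}$, so $\partial F_j/\partial p_i = 2g_{ij}(x)$, i.e. the $j$-th row of $D_pF$ is the vector $(g_{0j},\dots,g_{kj})$. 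Working modulo the simplex constraint $\sum_i dp_i = 0$, what matters is whether the columns $G_i := (g_{ij})_{j}$, $i=0,\dots,k$, span a $k$-dimensional space after quotienting by the all-ones direction — equivalently whether the differences $G_i - G_0$, $i=1,\dots,k$, are linearly independent. So the key step is: show that for generic choices of $v_0,\dots,v_k$ (and hence generic mutual support region $S$ and generic $x\in S$), these $k$ difference vectors are linearly independent. Here I would exploit that the $v_i$ enter $g_{ij}$ through the edge lengths $|e_j|_{v_i}$ and the geodesic norms, and argue that a linear dependence among the $G_i-G_0$ is a nontrivial polynomial (or semi-algebraic) condition on the coordinates of $V$, hence fails on a dense open set; a clean way is to produce one explicit configuration $V$ (for instance using the simple-geodesic case of Lemma~\ref{lem:planar}, where $g_{ij}$ is an affine-in-$p$ function with constant coefficients $\alpha_{ij}$, and the $\alpha_{ij}$ can be chosen so the relevant matrix has full rank $k$) where the rank is $k$, and then invoke lower-semicontinuity of rank together with the semi-algebraic structure to conclude the rank-$k$ locus is open and dense.

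Finally I would assemble these pieces: for generic $V$, pick $p^*$ in the relative interior of $\simplex{k}$ and a corresponding $x^*\in S^\circ$ on $\Pi(V)$; by the above the implicit map $x(\cdot)$ exists and has differential of rank $k$ at $p^*$, so its graph is a genuine $k$-dimensional $C^1$ submanifold near $\mu(V,p^*)$, and since $\Pi(V)\cap S$ coincides with this graph, $\Pi(V)$ is $k$-dimensional in $S$. The dimension-$k$-or-lower half holds for \emph{all} $V$ by the graph argument alone.

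The main obstacle I anticipate is the genericity/rank step. The functions $g_{ij}(x)$ are only piecewise smooth (smooth on the interiors of mutual support regions), and within a support region they involve the square-root norms $\|A_{xv_i}^{(l)}\|$, so they are semi-algebraic rather than polynomial in $(V,x)$; one must be careful that "generic" is taken in the right space (the $v_i$ range over a product of orthants, and for each choice the relevant support region and its interior vary), and that the bad locus where the rank drops is genuinely lower-dimensional rather than, say, an artifact of stickiness forcing $\mu(V,\cdot)$ to be non-injective on an open subset of $\simplex{k}$. I would handle this by noting that stickiness corresponds precisely to $x(p)$ leaving $\orthant_\tau$ or some $x_j(p)$ hitting zero — a boundary phenomenon — whereas inside a fixed mutual support region $S^\circ$ with $x^*$ interior, no such degeneracy occurs, so it suffices to exhibit a single non-degenerate configuration there and propagate by semi-algebraic genericity. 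Reducing to the simple-geodesic affine case, where the computation is transparent, is the cleanest route around this obstacle.
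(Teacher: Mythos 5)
Your proposal is correct and follows essentially the same route as the paper: invertibility of $D_xF$ from the positive-definiteness lemma, the implicit function theorem to obtain a local graph $p\mapsto x(p)$ of dimension at most $k$, and the observation that the dimension equals $k$ exactly when $\nabla_pF$ has rank $k$ on the tangent space of $\simplex{k}$, which holds generically. You actually go further than the paper on the genericity step --- the paper simply asserts it by analogy with affinely independent points in Euclidean space, whereas your sketch via lower semicontinuity of rank, the semi-algebraic structure of the degeneracy locus, and an explicit full-rank configuration in the simple-geodesic case is a reasonable way to make that assertion precise.
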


\begin{proof}
Application of the implicit function theorem to the map $F$ when $x\in S$ establishes that there is a locally-defined function $g:\simplex{k}\rightarrow S$ such that $F(g(p),p)=0$ and that the locus $(g(p),p)$ is a $k$-dimensional submanifold of $S\times\simplex{k}$. 
In fact, the image $g(p)\subset S$ will be $k$-dimensional when the derivative of $F$ with respect to $p$, $\nabla_pF$, has rank $k$ which is the case for generic arrangements of points $v_0,\ldots,v_k$ in tree-space. 
This is analogous to considering the hyperplane containing $k+1$ given points in Euclidean space: generically the hyperplane has dimension $k$ but the dimension can be lower.  
\end{proof}

\subsection{Explicit calculation}\label{sec:LFM_example}



\begin{figure}
\begin{center}
\includegraphics[scale=0.6]{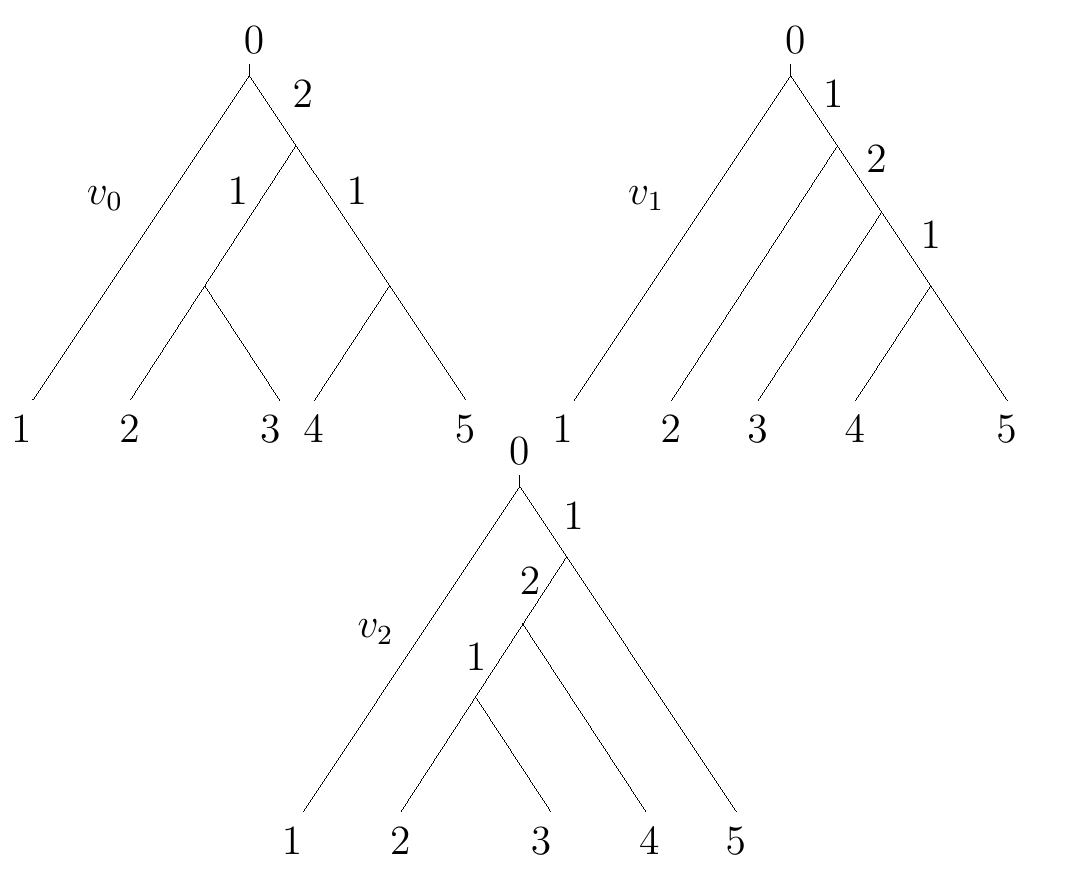}
\includegraphics[scale=0.6]{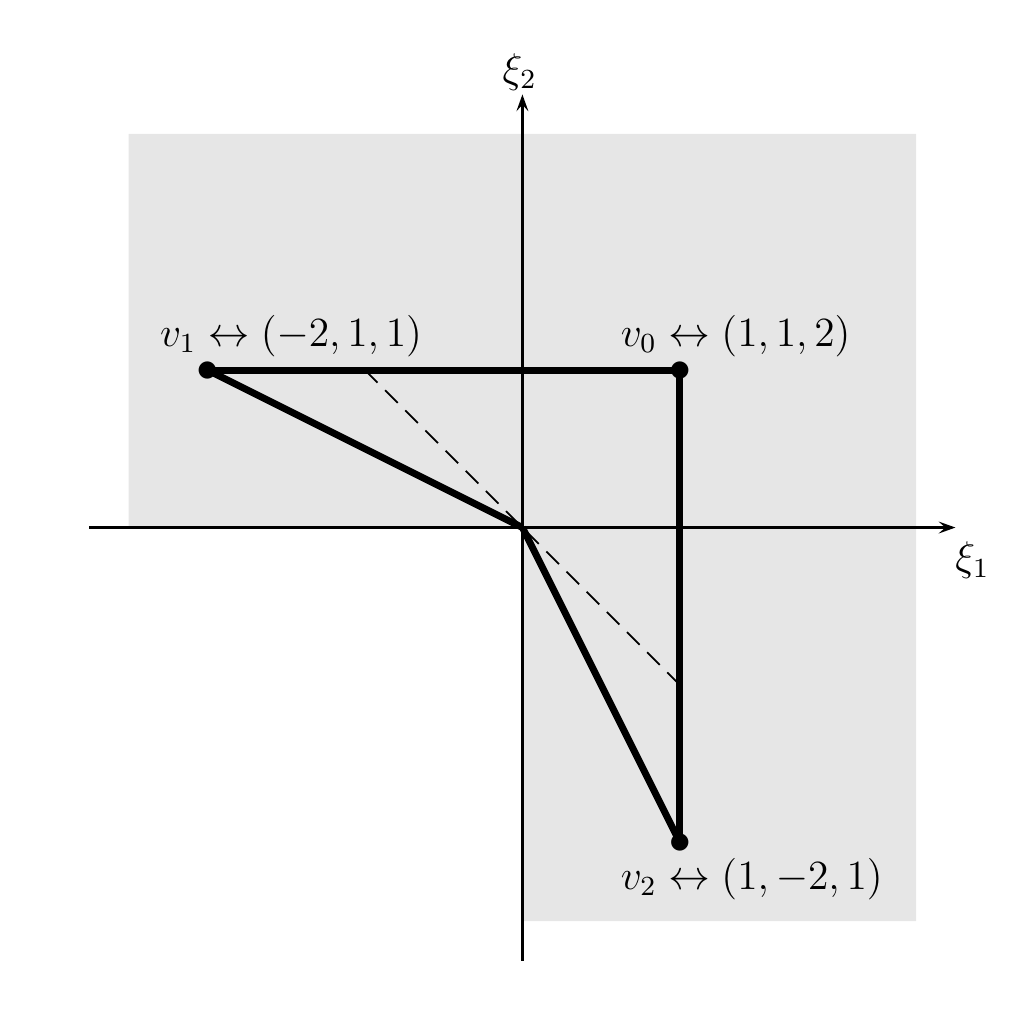}
\caption{Left: topologies for the trees $v_0,v_1,v_2$ for the example in Section~\ref{sec:LFM_example}. 
Weights for internal edges are shown. 
Right: coordinates of the trees $v_0,v_1,v_2$ under the identification with orthants in $\RR^3$. 
The $\xi_3$ axis points out of the page. 
The geodesics between $v_0,v_1,v_2$ are shown: $\Gamma(v_1,v_2)$ `kinks' around the origin. 
The dashed line is between points $(-1,1,4/3)$ and $(1,-1,4/3)$ on $\Gamma(v_0,v_1)$ and $\Gamma(v_0,v_2)$ respectively. 
\label{fig:example}}
\end{center}
\end{figure}

In this section we construct an explicit example of the locus of the Fr\'echet mean for three points in $\treespace{5}$. 
This example helps demonstrate the nature of geodesics in tree-space, the derivation of the impicit equations for $\Pi(V)$, the relationship with the convex hull and other geometrical features. 
We start by fixing $v_0,v_1,v_2$ to have the topologies and edge lengths shown in Figure~\ref{fig:example}. 
We will ignore the pendant edge lengths, and so the orthants containing these trees can be identified with three orthants in $\RR^3$ equipped with standard coordinates $\xi_1,\xi_2,\xi_3$. 
There are five splits contained in these trees (other than the pendant splits): they will be denoted $\{0,1\}$, $\{2,3\}$, $\{4,5\}$, $\{3,4,5\}$, $\{2,3,4\}$ by neglecting the complements in $X=\{0,1,\ldots,N\}$. 
We then use the notation $x(\{0,1\})$ to denote the length associated to split $\{0,1\}$ in tree $x$, for example. 
Under the identification with $\RR^3$ we have
\begin{align*}
\xi_1&=x(\{2,3\})\ \text{when\ }\{2,3\}\in x,\quad &\xi_1&=-x(\{3,4,5\})\ \text{when\ }\{3,4,5\}\in x,\\
\xi_2&=x(\{4,5\})\ \text{when\ }\{4,5\}\in x,\quad &\xi_2&=-x(\{2,3,4\})\ \text{when\ }\{2,3,4\}\in x,
\end{align*}
and $\xi_3 = x(\{0,1\})$. 
Figure~\ref{fig:example} shows the location of trees $v_0,v_1,v_2$ under this identification.  
The orthant $\xi_1<0, \xi_2<0, \xi_3>0$ does not correspond to a valid tree topology as $\{3,4,5\}$ is not compatible with $\{2,3,4\}$. 
At each codimension-$1$ face between the orthants shown there is in fact a third orthant in $\treespace{5}$ glued at the same boundary, but these do not play a role in this example. 

Figure~\ref{fig:example} shows that the geodesics $\Gamma(v_0,v_1)$ and $\Gamma(v_0,v_2)$ are straight line segments under the identification with $\RR^3$, while the geodesic $\Gamma(v_1,v_2)$ `kinks' at a codimension-$2$ face. 
This behaviour is typical of geodesics in $\treespace{N}$: they are straight line segents within each orthant but they can contain kinks at the voundaries between orthants.  
Figure~\ref{fig:example} also shows how the convex hull of $v_0,v_1,v_2$ has dimension $3$. 
The dashed line shows the geodesic between points $(-1,1,4/3)$ and $(1,-1,4/3)$ on $\Gamma(v_0,v_1)$ and $\Gamma(v_0,v_2)$ respectively. 
The convex hull therefore contains the points $(0,0,1)$ and $(0,0,4/3)$ and so there are $4$ points which are not coplanar within each orthant of the convex hull. 

\begin{figure}
\begin{center}
\includegraphics[scale=0.6]{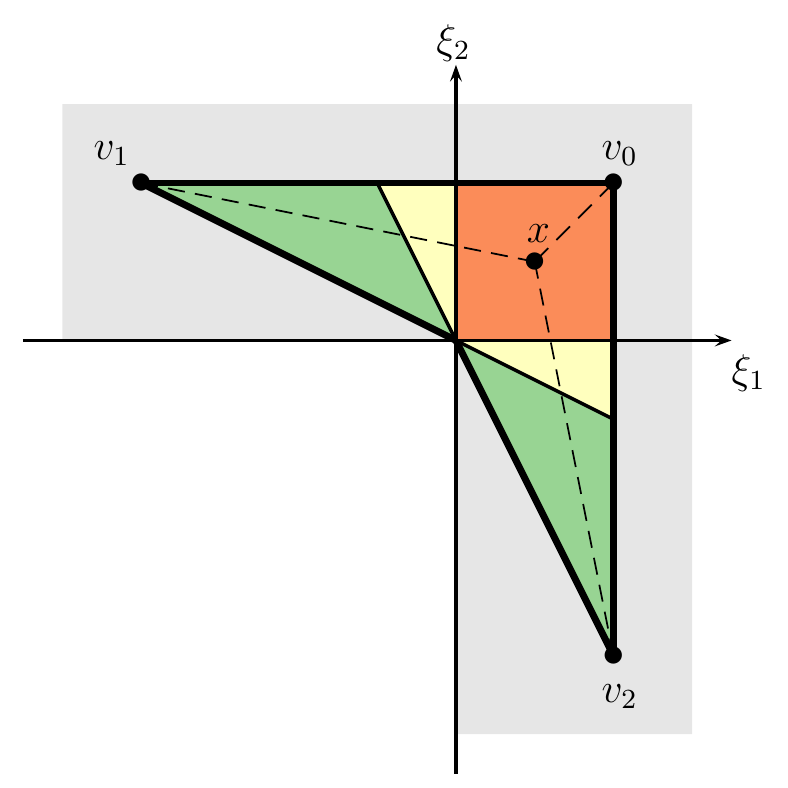}
\includegraphics[scale=0.6]{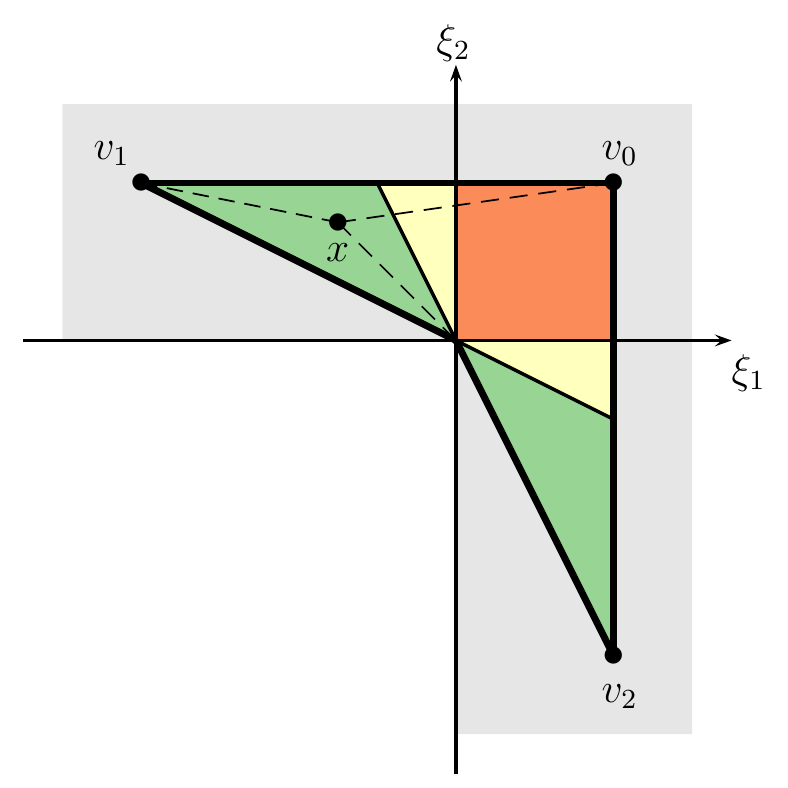}
\caption{Decomposition of the locus of the Fr\'echet mean into mutual support regions. 
There are five such regions, highlighted in different colours. 
The dashed lines show the geodesics between a point $x$ and the points $v_0,v_1,v_2$. 
Left: when $x$ is contained in the orange region, none of the geodesics $\Gamma(x,v_i)$ hit codimension-$2$ orthant faces, and so Lemma~\ref{lem:planar} shows $\Pi(V)$ is planar. 
The same applies to the two yellow mutual support regions. 
Right: when $x$ is contained one of the green regions then $\Gamma(x,v_2)$ is not simple (it hits a codimension-$2$ boundary) and so $\Pi(V)$ is not planar. 
\label{fig:LFM_decomp}}
\end{center}
\end{figure}

Figure~\ref{fig:LFM_decomp} shows the decomposition of the orthants into mutual support regions for $v_0,v_1,v_2$. 
There are five regions in total, and the geodesics $\Gamma(x,v_i)$ are simple for all $i=0,1,2$ when $x$ is contained in three of the regions. 
Lemma~\ref{lem:planar} shows that $\Pi(V)$ is therefore planar in those regions with equation
\begin{equation*}
\xi = (p_0-2p_1+p_2, p_0+p_1-2p_0, 1+p_0).
\end{equation*}
We can also explicitly calculate equations for $\Pi(V)$ in the mutual support region contained in $2\xi_1+\xi_2<0$ and shown on the right in figure~\ref{fig:LFM_decomp}. 
For $x$ contained in this region, the squared distances to the vertices are
\begin{align*}
d(x,v_0)^2 &= (1-\xi_1)^2 + (1-\xi_2)^2 + (2-\xi_3)^2\\
d(x,v_1)^2 &= (2+\xi_1)^2 + (1-\xi_2)^2 + (1-\xi_3)^2\\
d(x,v_2)^2 &= \left( \sqrt{5}+(\xi_1^2+\xi_2^2)^{1/2} \right)^2 + (1-\xi_3)^2
\end{align*}
where $x$ has coordinates $\xi_1,\xi_2,\xi_3$. 
These can be used to write down an equation for $\Omega(x,p)$, and then equation~$\eqref{equ:omega_deriv}$ becomes
\begin{equation*}
\nabla_\xi\Omega = \left( 2\xi_1+2\frac{p_2\xi_1\sqrt{5}}{(\xi_1^2+\xi_2^2)^{1/2}}+4p_1-2p_0 ,\ 2\xi_2+2\frac{p_2\xi_1\sqrt{5}}{(\xi_1^2+\xi_2^2)^{1/2}}-2p_1-2p_0 ,\ 2p_0+2-2\xi_3 \right).
\end{equation*}
Then $\nabla_\xi\Omega=0$ can be solved to give
\begin{equation*}
\xi = \left( p_0-2p_1+p_2\sqrt{5}\left( 1+f(p)^2  \right)^{-1/2} ,\ p_0+p_1-p_2\sqrt{5}\left( 1+f(p)^{-2}  \right)^{-1/2} ,\ p_0+1 \right)
\end{equation*}
whenever $p_0<2p_1$, where $f(p) = (p_0+p_1)/(p_0-2p_1)$.
The resulting surface is show in Figure~\ref{fig:LFM_3d}. 
The figure shows how $\Pi(V)$ forms a non-convex $2$-dimensional surface which is contained within the convex hull. 

\begin{figure}
\begin{center}
\includegraphics[scale=0.35,trim={0cm 2.5cm 0cm 2.5cm},clip]{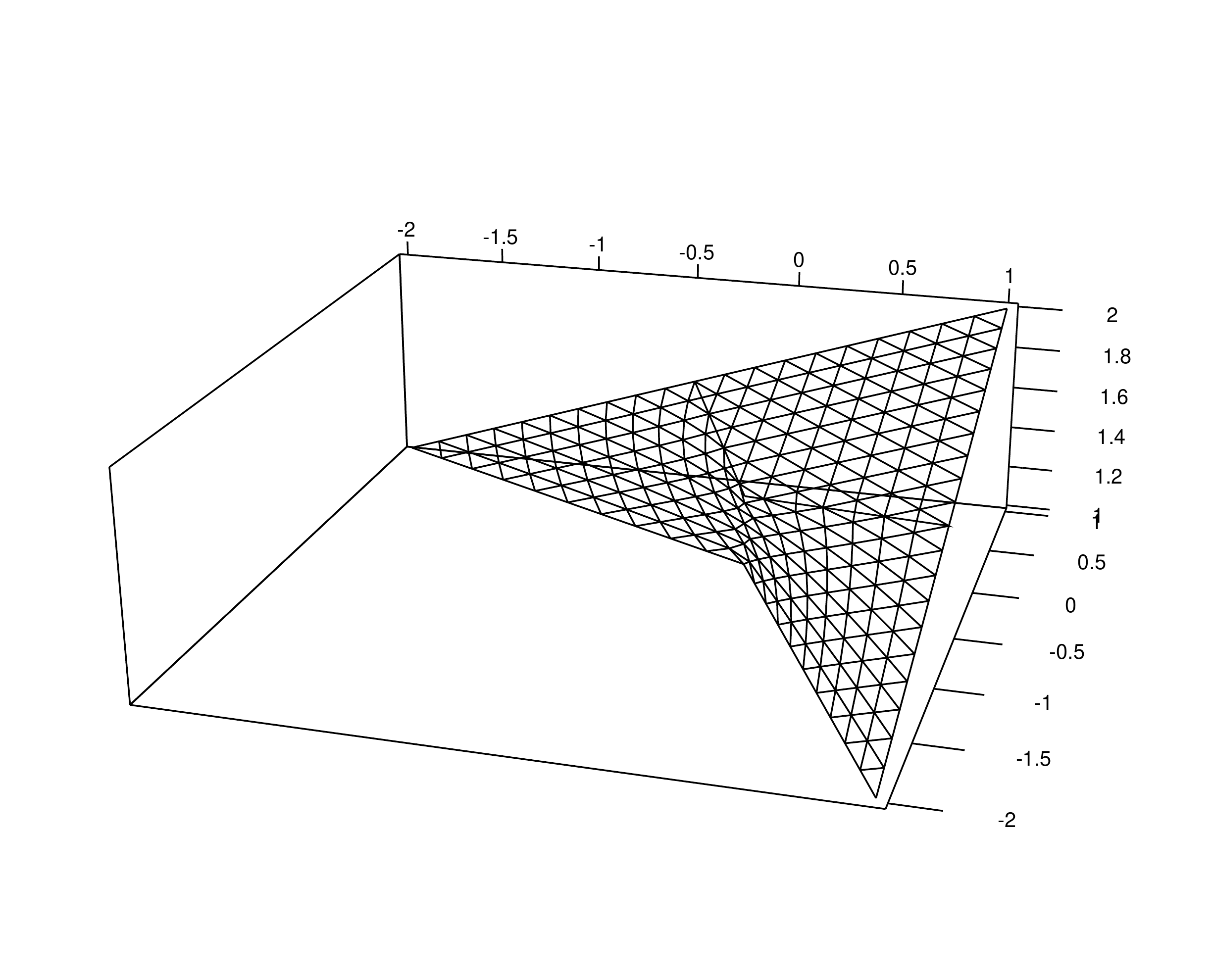}
\caption{Persepective view of $\Pi(V)$ for the example in Section~\ref{sec:LFM_example}.
The locus of the Fr\'echet mean is a $2$-dimensional surface which resembles a rubber sheet pulled taut between the corners. 
\label{fig:LFM_3d}}
\end{center}
\end{figure}

\section{Projection onto the locus of the Fr\'echet mean and principal component analysis: algorithms}\label{sec:projection_and_pca}

\subsection{Projection}\label{sec:projection}

In order to use the surface $\Pi(V)$ as a principal component, we need to be able to project data onto $\Pi(V)$. 
Let $z\in\treespace{N}$ denote a data point and fix $V=\{v_0,\ldots,v_k\}$. 
A projection of $z$ onto $\Pi(V)$ is a point which minimizes $d(z,\Pi(V))$. 
This point might not be unique as $\Pi(V)$ is not convex. 
A naive algorithm to find a projection is to perform exhaustive search, as follows.

\begin{algo}Exhaustive search to project $z$ onto $\Pi(V)$. \label{alg:exhaustive}
\begin{enumerate}
\item Construct a lattice of points $L\subset\simplex{k}$. For $k=2$ this is a triangular lattice. 
\item For each point $p\in L$ use a standard algorithm to compute $\mu(V,p)$. (See Section~\ref{sec:FMalgorithms}.)
\item Find $p\in L$ which minimizes $d(z,\mu(V,p))$.
\end{enumerate}
\end{algo}

We implemented this algorithm for $k=2$ and used the algorithm of~\cite{bacak2014} at step 2 to compute Fr\'echet means. 
Algorithm~\ref{alg:exhaustive} is computationally very expensive, since the resolution of the lattice $L$ needs to be quite high in order to obtain accurate results. 
Consequently we only use the exhaustive search algorithm in what follows as a benchmark in order to assess other methods. 

As an alternative to exhuastive search, we would like a more efficient algorithm which is defined entirely in terms of the geodesic geometry, since any reliance on local differentiable structure is likely to be problematic at orthant boundaries.  
We propose Algorithm~\ref{alg:projection}, which we call the \emph{geometric projection algorithm}.  

\begin{algo} Geometric projection algorithm to project $z$ onto $\Pi(V)$. \label{alg:projection}
\begin{tabbing}
Fix an initial estimate $\mu_0$ of the projection of $z$, let $p=(0,\ldots,0)$ and set $i=0$. \\
\textbf{Repeat}: \\
   \qquad 1. Construct $\Gamma(\mu_i,v_j)$ for $j=0,\ldots,k$.\\
   \qquad 2. For $j=0,\ldots, k$ let $y_{i,j}$ be the point a proportion $s_i=1/(i+2)$ along $\Gamma(\mu_i,v_j)$.\\
   \qquad 3. Find $r\in\{0,\ldots,k\}$ which minimizes $d(z,y_{i,r})$.\\
   \qquad 4. Set $\mu_{i+1}=y_{i,r}$ and let $p\leftarrow ip/(i+1)+e_r/(i+1)$ where $e_r$ is the $r$-th\\
   \qquad\enspace standard basis vector in $\simplex{k}$. \\
   \qquad 5. Set $i\leftarrow i+1$. \\
\textbf{Until} the sequence $\mu_0,\mu_1,\ldots$ converges.
\end{tabbing}
\end{algo}

The algorithm is a modification of Sturm's algorithm for computing the Fr\'echet mean of $V$ (Algorithm~\ref{alg:sturm}.) 
At each step of Sturm's algorithm, one of the points $y_{i,j}$ is used as the new estimate $\mu_{i+1}$, and the point $y_{i,j}$ is sampled according to a fixed probability vector $p$. 
Here, the new estimate for the projection, $\mu_{i+1}$, is again chosen from $y_{i,0},\ldots,y_{i,k}$ but instead is selected to greedily minimize the distance from $z$. 
The vector $p\in\simplex{k}$ estimates the weight vector associated to the projected point: at iteration $i$, $i\times p$ is a vector with integer entries which counts the number of times the algorithm has moved the estimate of the projection towards each vertex in $V$. 
The computational cost of the algorithm is similar to that for computing a single Fr\'echet mean using the Sturm algorithm. 
For $k=2$ the initial point $\mu_0$ is sampled uniformly from the perimeter of $\Pi(V)$. 
Convergence is tested as follows: at iteration $i$ it is determined whether $d(\mu_s,\mu_t)<\epsilon$ for all $s,t\in\{i-m,\ldots,i\}$  where $\epsilon>0$ and $m$ are fixed. 
If that is the case then the algorithm terminates. 
The output from the algorithm after $I$ iterations is an estimate $\mu_I$ of the projection of $z$ and a vector $p\in\simplex{k}$. 

The geometric projection algorithm is presented here without a proof of convergence and without further theoretical study of its properties. 
Instead we rely on a simulation study in the next section to assess the effectiveness of the algorithm. 

\subsection{Simulations}

We ran a set of simulations designed to demonstrate that, specifically in the case that $k=2$, Algorithm~\ref{alg:projection} converges to a tree on $\Pi(V)$ which minimizes $d(z,\Pi(V))$. 
For each iteration of the simulation, a random species tree $u$ with $N=6$ taxa was generated under the \cite{kingman1982coalescent} coalescent. 
Three trees $v_0,v_1,v_2$ and a fourth test tree $z$ were then generated under a coalescent model constrained to be contained within the tree $u$, and thus corresponded to gene trees coming from the underlying species tree $u$. 
(See \cite{maddison1997gene} for more information about the relationship between species trees and gene trees.)
The DendroPy library \cite{dendropy} was used to generate these trees. 
The test tree $z$ was then projected onto $\Pi(V)$ for $V=\{v_0,v_1,v_2\}$ using the exhaustive search algorithm and the geometric projection algorithm. 
All calculations were carried out ignoring pendant edges.
This particular simulation scheme was chosen in order to generate a variety of different geometrical configurations for the points $v_0,v_1,v_2,z$, as well as being biologically reasonable.  
If the trees were sampled with topologies chosen independenly uniformly at random, for example, the simulation procedure would only have explored instances of $\Pi(V)$ with widely differing vertices. 

The results obtained from the two algorithms were compared in two ways. 
First, the distances from the data tree to the trees obtained with the two algorithms were computed and checked to ensure that the projection algorithm obtained a distance less than or equal to the exhaustive search. 
Second, the distance between the tree from geometric projection and tree from exhaustive search was checked to ensure that the two trees were close together. 
For the second check we considered any distance greater than 1\% of the total internal length of the data tree to be a failure.

In a run of 10,000 iterations of this procedure, 95.65\% of the iterations passed the two tests. However, even the set of failing iterations produced a projection result which were quite close to the exhaustive search result. 
Among the 435 failing iterations, the perpendicular distance for the projection was an average of 3.7\% greater than the perpendicular distance of the exhaustive search, and the distance between the two results was an average of 4.7\% of the total internal length of the data tree.

We believe that the failing results are attributable to the projection algorithm becoming trapped in local minima of the perpendicular distance. 
Starting the algorithm from several locations and comparing the results would help to mitigate this problem. 
However, for the present purpose of fitting higher-order principal components to a collection of data trees, we believe these small deviations from the exhaustive search solution are an acceptable trade for the great increase in computational speed obtained.

\subsection{Stochastic optimization for principal component analysis}\label{sec:pca}

Given data $Z=\{z_1,\ldots,z_n\}$, our objective is to find $V=\{v_0,\ldots,v_k\}$ which minimizes the sum of squared projected distances $D^2_Z(\Pi(V))$.
From this point on in the paper, we restrict to the case $k=2$.
The geometric projection algorithm is used to compute $D^2_Z(\Pi(V))$ given $V$, at least approximately, and so we must now consider how to search over the possible configurations of the vertices $V$.
We adopt a stochastic optimization approach, Algorithm~\ref{alg:optimization} below, which is similar to that for fitting principal geodesics in \cite{nye14b}.
We assume we have available a set of proposals $M_1,\ldots,M_m$, each of which is a map from $\treespace{N}$ to the set of distributions on $\treespace{N}$.
In particular, given any tree $x$, each $M_i(x)$ is asuumed to be a distribution on $\treespace{N}$ from which we can easily sample.

\begin{algo} Stochastic optimization algorithm to fit $\Pi(V)$ to $Z$. \label{alg:optimization}
\begin{tabbing}
Fix an initial set $V=\{v_0,v_1,v_2\}$ and compute $D^2_Z(\Pi(V))$. \\
\textbf{Repeat}: \\
   \qquad \textbf{For} $i=0,1,2$:\\
   \qquad \enspace \textbf{For} $j=1,\ldots,m$:\\
   \qquad \qquad 1. Sample a tree $w$ from $M_j(v_i)$.\\
   \qquad \qquad 2. Let $V'$ be the set $V$ but with $w$ replacing $v_i$.\\
   \qquad \qquad 3. Compute $D^2_Z(\Pi(V'))$ using the geometric projection algorithm.\\
   \qquad \qquad 4. If $D^2_Z(\Pi(V'))<D^2_Z(\Pi(V))$ set $V\leftarrow V'$.\\
\textbf{Until} convergence.
\end{tabbing}
\end{algo} 

The optimization algorithm attempts to minimize $D^2_Z(\Pi(V))$ by stochastically varying one point $v\in V$ at a time using the proposals $M_i(v)$.
The algorithm is greedy: whenever a configuration $V'$ improves upon the current configuration $V$ we replace $V$ with $V'$.
Convergence is assessed by considering the relative change in $D^2_Z(\Pi(V))$ over a certain fixed number of iterations. 
If this is less than some proportion then the algorithm terminates. 
We used three different types of proposal. 
The first samples a tree uniformly at random with replacement from the data set $Z$. 
The second type is a refinement of this: given a tree $x$ it similarly samples a tree $z$ uniformly at random with replacement from the data set $Z$. 
Then the geodesic $\Gamma(x,z)$ is computed, and a beta distribution is used to sample a tree some proportion of the distance along $\Gamma(x,z)$. 
The third type of proposal is a random walk starting from $x$, as described in \cite{nye14b}. 
The random walk proposals can have different numbers of steps and step-sizes. 
The algorithm is not guaranteed to find a global optimum, and it can become stuck in local minima. 
It is therefore necesary to run the algorithm several times with different starting points for each data set, and then compare the results from each run. 

Two statistics can be used to summarize the fit of $\Pi(V)$ to a data set $Z$: the sum of squared projected distances $D^2_Z(\Pi(V))$ and a non-Euclidean proportion of variance statistic, denoted $r^2$. 
If the projection of each data point $z$ onto $\Pi(V)$ is denoted $\pi(z_i)$ and $\bar{\pi}$ denotes the Fr\'echet mean of $\pi(z_1),\ldots,\pi(z_n)$, then 
\begin{equation*}
r^2 = \frac{ \sum_i d(z_i, \pi(z_i))^2 }{ \sum_i d(z_i, \pi(z_i))^2 +\sum_i d(\bar{\pi}, \pi(z_i))^2 }.
\end{equation*}
The denominator in this expression varies with $\Pi(V)$ since Pythagoras' theorem does not hold in tree-space. 
Unlike $D^2_Z(\Pi(V))$, the $r^2$ statistic is quite sensitive to small changes in $V$, but it can be interpreted broadly as the proportion of variance explained by $\Pi(V)$.

In order to assess the performance of the algorithm we performed a small simulation study. 
Eight data sets of 100 trees containing $N=10$ taxa were generated in the following way. 
For each data set a tree topology was sampled from a coalescent process, and each edge length was sampled from a gamma distribution with shape $\alpha=2$ and rate $\beta=20$, to give a tree $w_0$. 
Two trees $w_1,w_2$ were then obtained by applying random topological operations to $w_0$. 
In four of the data sets $w_1,w_2$ were obtained by performing nearest neighbour interchange operations, while in the other four data sets sub-tree prune and regraft operations were used. 
Then, to construct each data set given $W=\{w_0,w_1,w_2\}$, $100$ points were sampled from a Dirichlet distribution on $\simplex{2}$ with parameter $(4,4,4)$ and the corresponding points on $\Pi(W)$ were found using the Ba\v{c}\'ak algorithm. 
Each point was then perturbed by using a random walk, so that each data set resembled a cloud of points around the surface $\Pi(W)$. 
The step-size of the random walk was tuned to produce data sets classified as having either low or high dispersion. 
Table~\ref{tab:opt_sim} summarizes the data sets used and the simulation results. 
It shows the sum of squared projected distances $D^2_Z(\Pi(W))$ (the `true' sum of squared distances) and the fitted value $D^2_Z(\Pi(V))$, as well as the non-Euclidean $r^2$ statistic. 
The exhaustive projection algorithm was used to compute $D^2_Z(\Pi(W))$ while the geometric projection algorithm was used for $D^2_Z(\Pi(V))$. 
From the table it can be seen that the algorithm performs well in every scenario. 

\begin{table}
\begin{center}
\begin{tabular}{|c|cc|cc|}
\hline
Topological & \multicolumn{2}{|c|}{Low dispersion} & \multicolumn{2}{|c|}{High dispersion} \\
scenario    & $D^2_Z$ & $r^2$ & $D^2_Z$ & $r^2$ \\
\hline & & & & \\
$2$ NNI & $0.284 (0.270)$ & $40.9\% (49.8\%)$ & $2.65 (2.73)$ & $17.5\% (18.0\%)$\\
$4$ NNI & $0.310 (0.298)$ & $61.4\% (65.9\%)$ & $2.57 (2.91)$ & $26.5\% (20.4\%)$ \\
$2$ SPR & $0.255 (0.254)$ & $58.6\% (61.8\%)$ & $2.17 (2.41)$ & $28.5\% (20.7\%)$\\
$4$ SPR & $0.269 (0.278)$ & $54.0\% (48.2\%)$ & $2.39 (2.78)$ & $24.3\% (21.9\%)$ \\
\hline
\end{tabular}
\vspace{0.4cm}
\caption{Simulations to assess the stochastic optimization algorithm.
The left column describes number and type of topological operation used to obtain $w_1,w_2$ from $w_0$ for each data set. 
The abbreviation NNI stands for nearest neighbour interchange and SPR stands for sub-tree prune and regraft. 
For each scenario, two data sets were generated by perturbing points on $\Pi(W)$ via random walks, with low and high dispersions respectively. 
The table shows the value of $D^2_Z(\Pi(V))$ for the fitted principal component and in brackets the reference value $D^2_Z(\Pi(W))$. 
Similarly the non-Euclidean $r^2$ statistic is shown with the reference value in brackets.  
\label{tab:opt_sim}}
\end{center}
\end{table}

\section{Results}\label{sec:results}

\subsection{Coelacanths genome and transcriptome data}

We applied our proposed method to the dataset comprising 1,290 nuclear genes encoding 690,838 amino acid residues obtained from genome and transcriptome data by \cite{Liang2013}. 
Over the last decades researchers have worked on the phylogenetic relations between coelacanths, lungfishes and tetrapods, but controversy remains despite several studies \cite{Hedges2009}. 
Most morphological and paleontological studies support the hypothesis that lungfishes are closer to tetrapods than they are to coelacanths (Tree 1 in Figure 1 from \cite{Liang2013}).  
However, there exists research in the field that supports the hypothesis that coelacanths are closer to tetrapods  (Tree 2 in Figure 1 from \cite{Liang2013}).  
Others support the hypothesis that coelacanths and lungfishes form a sister clade (Tree 3 in Figure 1 from \cite{Liang2013}) or tetrapods, lungfishes, and coelacanths cannot be resolved (Tree 4 in Figure 1 from \cite{Liang2013}). 

We reconstructed gene trees using the R package ``Phangorn'' \cite{phangorn}, and each gene tree was estimated using the maximum likelihood (optim.pml and pml functions) under the Le-Gascuel (LG) model \cite{LG}. 
The data set consisted of 1290 gene alignments for 10 species. 
The species were lungfish (\textit{Protopterus annectens}, denoted Pa), coelacanth  (\textit{Latimeria chalumnae}, Lc), and three tetrapods: frog (\textit{Xenopus tropicalis}, Xt), chicken (\textit{Gallus gallus}, Gg) and human (\textit{Homo sapiens}, Hs). 
Two ray-finned fish, \textit{Danio rerio} (denoted Dr) and \textit{Takifugu rubripes} (denoted Tr), along with three cartilaginous fish (\textit{Scyliorhinus canicula, Leucoraja erinacea, Callorhinchus milii}) were included as an out-group. The cartilaginous fish will be denoted Sc, Le, and Cm respectively.  

Analysis was performed ignoring pendant edge lengths. 
A total of 97 outlying trees were removed using KDETrees \cite{KDETrees}, so that 1193 gene trees remained.
The Fr\'echet mean was computed using the Ba\v{c}\'ak algorithm and its topology is shown in Figure~\ref{fig:lung_PC2}. 
The mean tree does not resolve whether coelacanth or lungfish is the closest relative of the tetrapods. 
The sum of squared distances of the data points to the Fr\'echet mean was 19.7. 
A principal geodesic was constructed using the algorithm from~\cite{nye14b}: the sum of squared projected distances was 9.53 and the $r^2$ statistic was $51.4\%$. 
Traversing the principal geodesic gives trees with the same topology as the Fr\'echet mean which contract down to a star tree at one end of the geodesic, and expand in size at the other end. 
This shows that the principal source of variation in the data set is the overall scale of the gene trees, or in other words, the total amount of evolutionary divergence for each gene.  

\begin{figure}[!ht]
  \centering
  \begin{tabular}{ccc}
        \includegraphics[width=6cm]{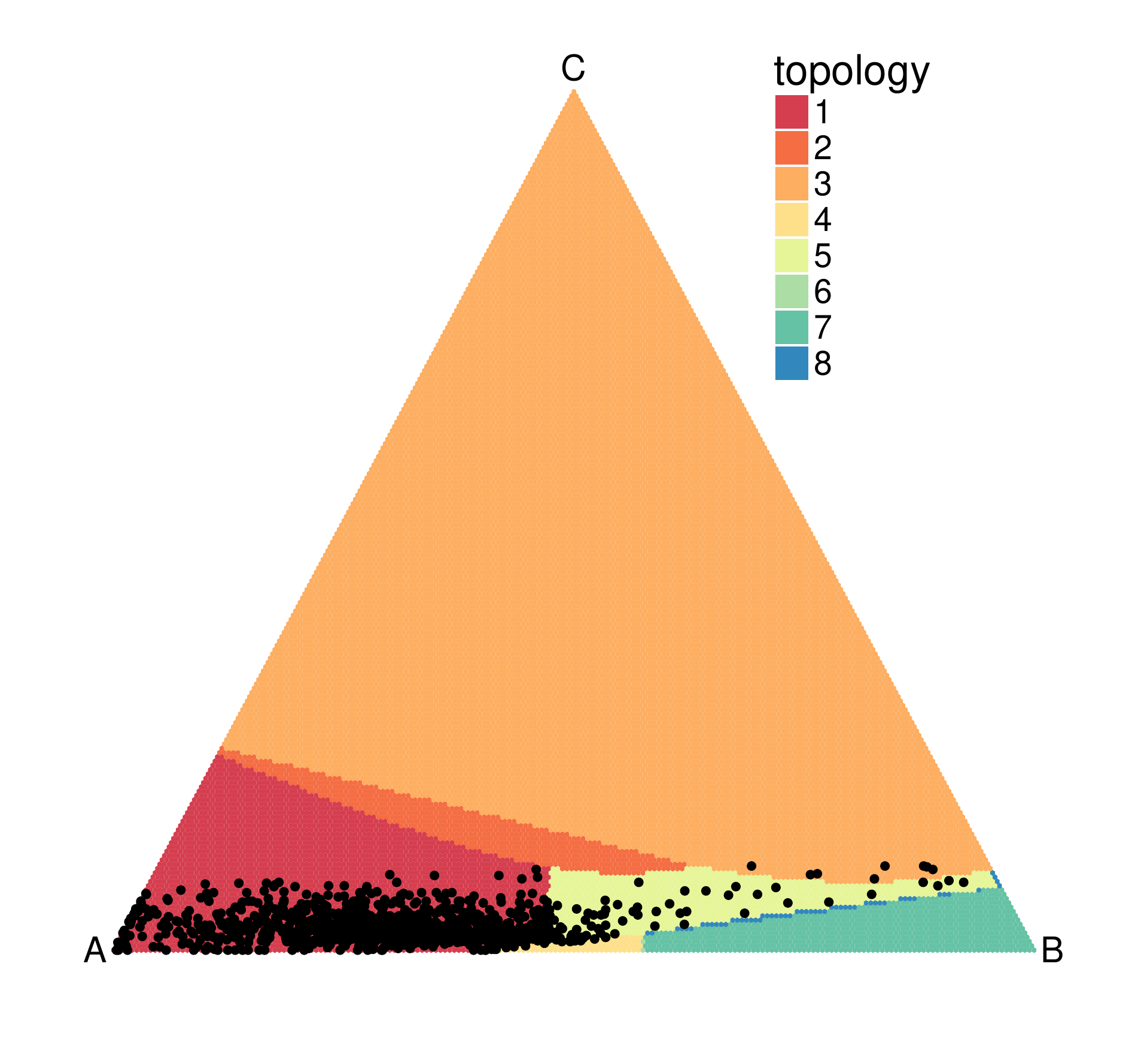}  &\hspace{-0.4cm} &
        \includegraphics[width=8cm]{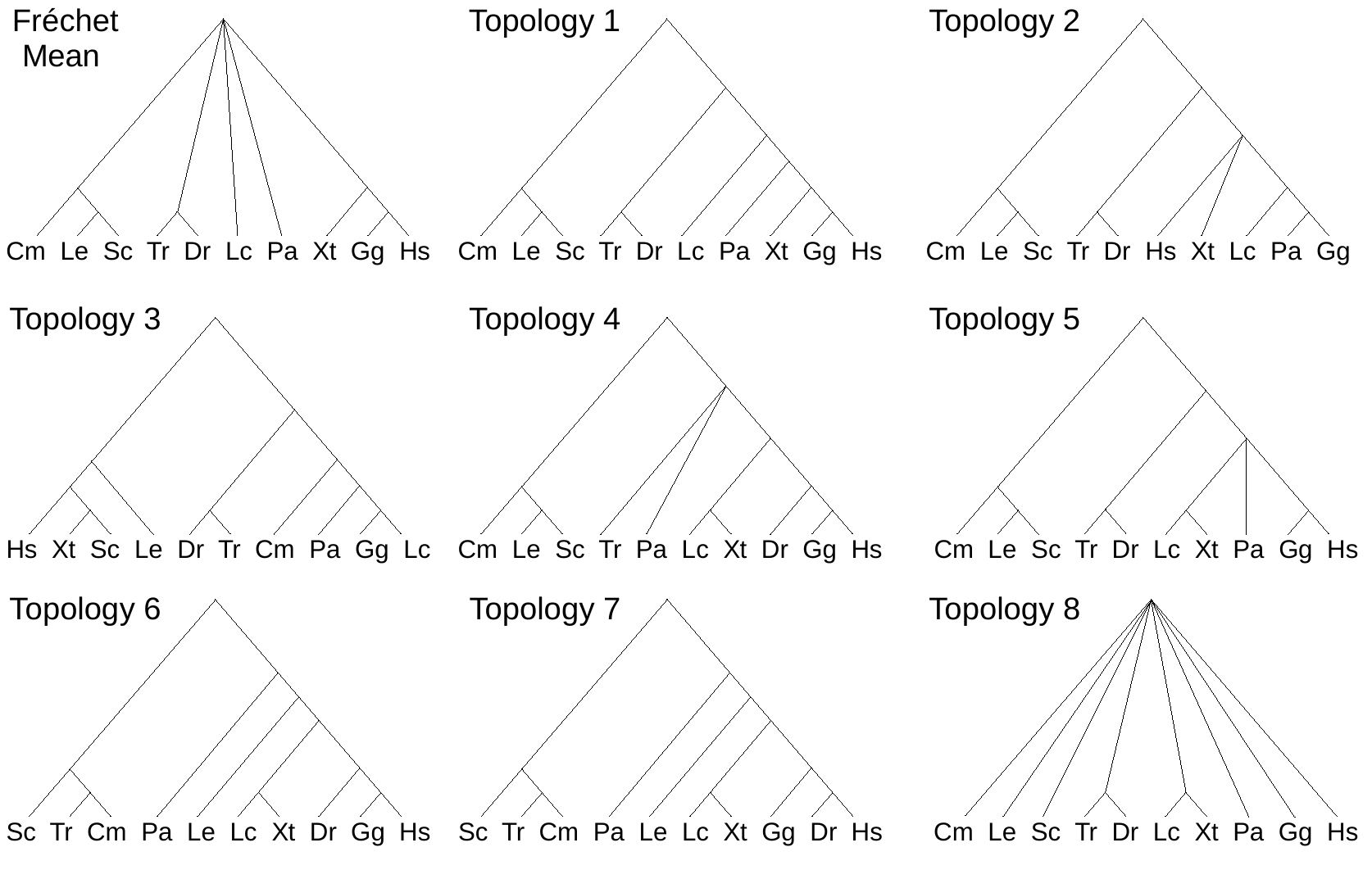}
        \end{tabular}
      \caption{The second principal component computed from the lungfish data set.
      Left: The simplex shaded according to the topology of the corresponding points on $\Pi(V)$. 
      The projections of the data points are also displayed. Right: topologies of trees on $\Pi(V)$.}\label{fig:lung_PC2}
\end{figure}

Figure~\ref{fig:lung_PC2} illustrates the second principal component. 
The sum of squared projected distances was 7.29 and the $r^2$ statistic was $61.8\%$. 
This represents a relatively small increase in the proportion of variance in relation to the principal geodesic.  
Three runs of Algorithm~\ref{alg:optimization} were performed to construct the second principal component. 
The results obtained had very similar summary statistics, but the topologies displayed on the surfaces were more variable. 
Figure~\ref{fig:lung_PC2} is therefore a representative choice. 
Although the projected points are clustered towards the bottom of the simplex in the figure, the full simplex was drawn to show all the different topological regions. 
The points can be separated by zooming in on the simplex.  
Of the 1193 gene trees, $1094$ projected to points with topology 1, which supports lungfish as the closest relative of the tetrapods (Pa grouped next to Hs, Gg, Xt). 
From the remaining projected data points, 75 have topology 5. 
This topology places both lungfish and coelacanth in a clade with the tetrapods. 
Several topologies (3,4,6 and 7) have biologically implausible relationships. 
However, the projected data points lying outside topology 1 all lie close to the boundary of their respective orthants (with at least one edge length less than 0.0005), so for example, the projected data points with topology 3 have very short edge lengths for the biologically implausible clades (such as the grouping of Xt with Sc) and so lie close to trees with more plausible topologies. 
Overall, the second principal component suggests that the data support lungfish as the closest relative of tetrapod (topology 1), and that most of the variation within the data comes from edge length variation within that topology rather than from conflicting topologies. 
It is interesting to note that the Fr\'echet mean and principal geodesic did not exhibit topology 1, and that the second order principal component was needed to resolve the controversial relationship between the coelacanth, lungfish and tetrapods. 
The exhaustive projection algorithm was used to project the data onto the surface $\Pi(V)$ produced by Algorithm~\ref{alg:optimization}, in order to compare with the results obtained by geometric projection. 
The sum of squared distances between the projected trees obtained with the two different algorithms was $0.004$, a small fraction of the sum of squared projected distances $7.29$ for $\Pi(V)$. 

\subsection{Apicomplexa}

We also applied our method to a set trees constructed from of 268 orthologous sequences from eight species of protozoa presented in \cite{kissinger}. 
The data set from \cite{kissinger} consists of gene trees
reconstructed from the following sequences: {\it Babesia bovis} (Bb)
\cite{Brayton2007} from GenBank (GenBank accession numbers
AAXT01000001--AAXT01000013), {\it Cryptosporidium parvum} (Cp)
\cite{Abrahamsen2004} from CryptoDB.org \cite{Heiges2006}, {\it
  Eimeria tenella} (Et) from GeneDB.org \cite{Hertz-Fowler2004},
{\it Plasmodium falciparum} (Pf) \cite{Gardner2002} and {\it
  Plasmodium vivax} (Pv) from PlasmoDB.org \cite{Bahl2003}, {\it
  Theileria annulata} (Ta) \cite{Pain2005} from GeneDB.org
\cite{Hertz-Fowler2004}, and {\it Toxoplasma gondii} (Tg) from
Toxo-DB.org \cite{Gajria2008}. A free-living ciliate, {\it
  Tetrahymena thermophila} (Tt) \cite{Eisen2006}, was used as the
outgroup.

The phylum Apicomplexa contains many important protozoan pathogens
\cite{Levine1988}, including the mosquito-transmitted
\emph{Plasmodium} spp., the causative agents of malaria; \emph{T.  gondii}, which is one of the most prevalent zoonotic pathogens worldwide; and the water-born pathogen \emph{Cryptosporidium} spp. 
Several members of the Apicomplexa also cause significant morbidity and mortality in both wildlife and domestic animals. These include \emph{Theileria} spp. and \emph{Babesia} spp., which are tick-borne haemoprotozoan ungulate pathogens, and several species of
\emph{Eimeria}, which are enteric parasites that are particularly detrimental to the poultry industry. Due to their medical and veterinary importance, whole genome sequencing projects have been completed for multiple prominent members of the Apicomplexa.  We removed 16 outlier trees using the KDETrees software \cite{KDETrees} before fitting principal components.

The trees were analysed ignoring pendant edges. 
The Fr\'echet mean was computed using the Ba\v{c}\'ak algorithm: the corresponding tree topology was unresolved, and is shown in Figure~\ref{fig:api_PC2}.
The sum of squared distances from the mean to the data points was $24.6$. 
The principal geodesic was estimated using the algorithm from~\cite{nye14b}. 
The principal geodesic has a non-euclidean $r^2$ score of $40\%$ and the sum of squared projected
distances was $14.2$.
The principal geodesic displays two main effects: (i) the edges leading to the (Pv, Pf) clade, (Tg, Et) clade and (Bb, Ta) clade vary substantially in length and (ii) a topological re-arrangement whereby the clade containing (Pv, Pf) paired with (Et, Tg) is replaced with a clade containing (Pv, Pf) paired with (Bb, Ta).
However, the second effect involved very short internal edges, so that along its length, the trees on the principal geodesic resembled the mean tree shown in Figure~\ref{fig:api_PC2} but with different overall scale. 
The principal geodesic therefore reflects variation in the scale of the tree.

\begin{figure}[!ht]
  \centering
  \begin{tabular}{ccc}
        \includegraphics[width=6.5cm]{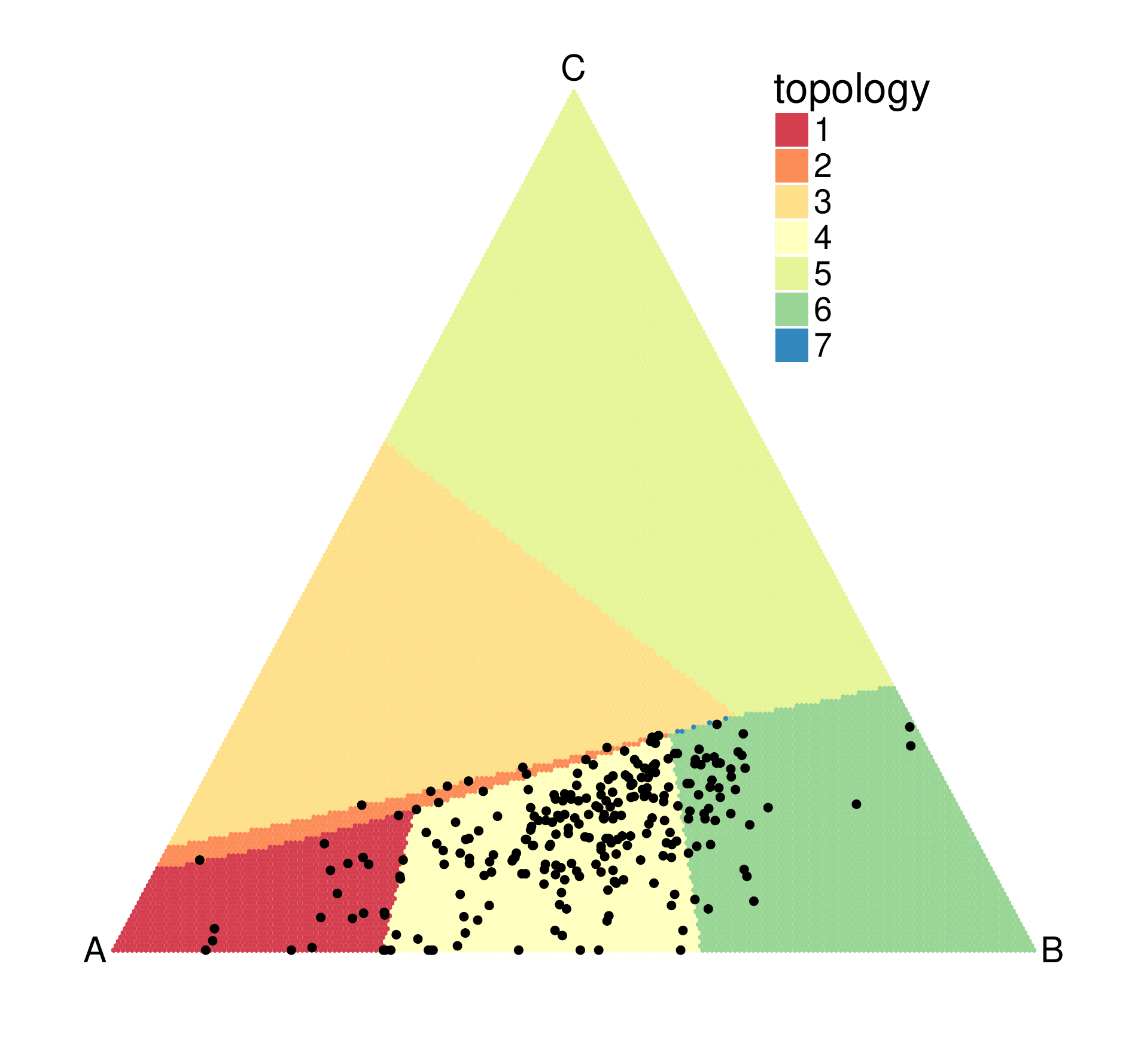}  &\hspace{-0.4cm} &
 \begin{minipage}{7.5cm}\vspace{-6cm}     \includegraphics[width=7.5cm]{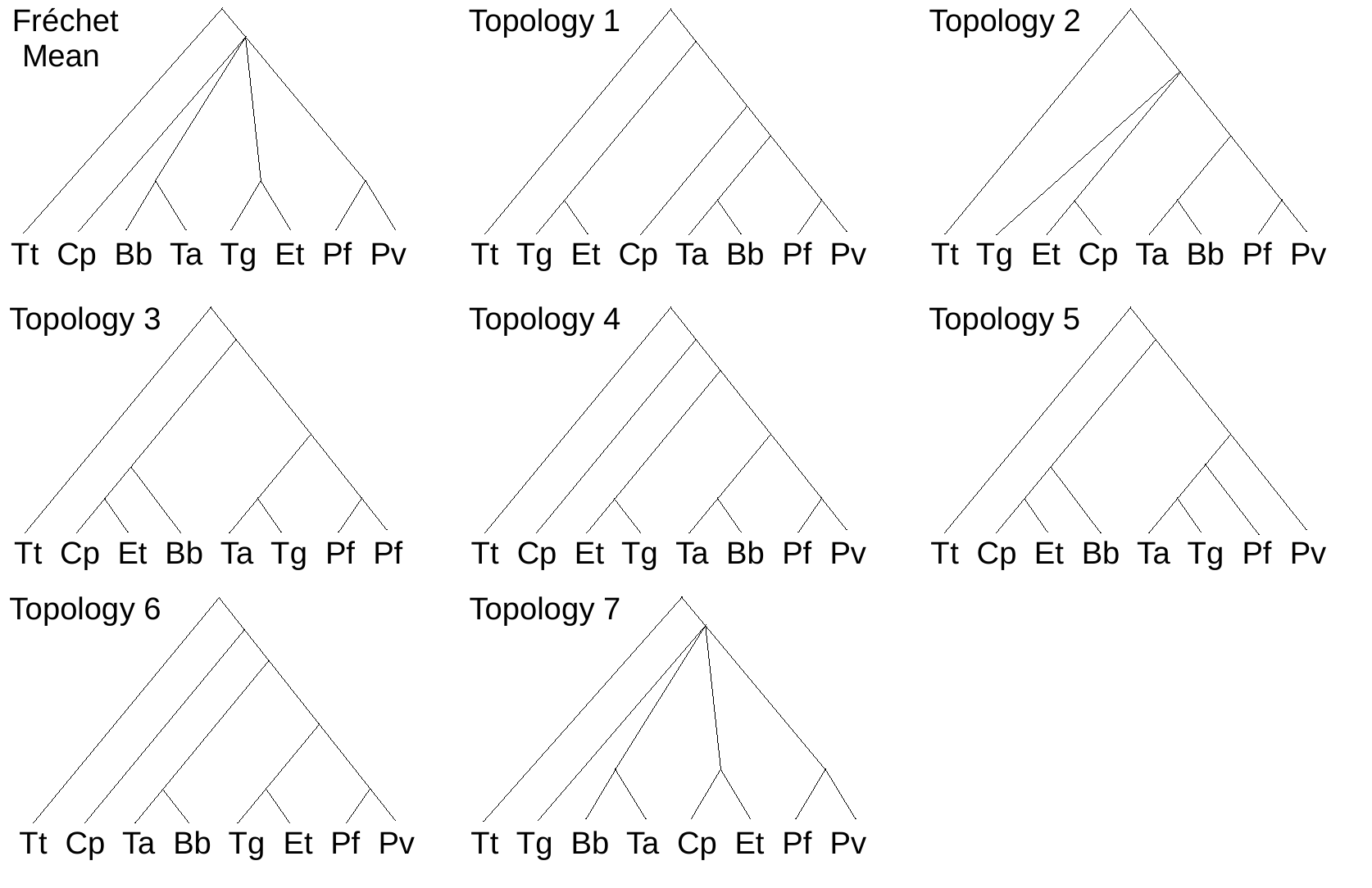} \end{minipage}
        \end{tabular}
      \caption{The second principal component computed from apicomplexa data set.
      Left: The simplex shaded according to the topology of the corresponding points on $\Pi(V)$. 
      The projections of the data points are also displayed. Right: topologies of trees on $\Pi(V)$.}\label{fig:api_PC2}
\end{figure}

Figure~\ref{fig:api_PC2} illustrates the second principal component, with the simplex shaded according to the corresponding tree topology on $\Pi(V)$. 
Three separate runs of Algorithm~\ref{alg:optimization} converged to give similar results. 
The summary statistics for the second principal component are: sum of squared projected distances $10.3$,
$r^2$ statistic $56\%$. 
While these summary statistics were consistent between runs, the set of topologies displayed on $\Pi(V)$ was subject to more variation, so Figure~\ref{fig:api_PC2} is a representative choice, although topologies 1,4 and 6 were present in all runs. 
The results show how the second principal component is able to tease more from the data than the variation in overall scale captured by the principal geodesic. 
Topology 4 is congruent with the generally accepted phylogeny of taxa within the Apicomplexa and is a resolution of the Fr\'echet mean tree: \textit{Theileria annulata} (Ta) and \textit{Babesia bovis} (Bb) group together; the two Plasmodium species (Pf and Pv) group together; \textit{Cryptosporidium parvum} is the deepest rooting apicomplexan; and Pv, Pf, Ta, Bb are monophyletic (they are all hemosporidians or blood parasites). 
The figure shows that the second principal component corresponds to variation in topology consisting of nearest neighbour interchange operations which transform topology 4 into topologies 1 and 6. 
None of the projected trees have topology 5, although this is the topology of one of the vertices of $\Pi(V)$. 
This topology appears to be present in order for $\Pi(V)$ to be positioned in such a way as to capture the other topologies. 
Topology 2 shows evidence of stickiness (see Section~\ref{sec:props}): although the topology is unresolved, so that the coloured triangle lies in a codimension-$1$ region of tree-space, it occupies non-zero area on the simplex. 
As for the lungfish, the exhaustive and geometric projection algorithms were compared on the surface $\Pi(V)$ produced by Algorithm~\ref{alg:optimization}. 
The distances between the projected points obtained with the two algorithms were very small compared to the distances of the data points from $\Pi(V)$ (the sum of squared distances between pairs of projected points was 3.91E-4).

\section{Discussion}\label{sec:discussion}

This paper presents three main innovations: (i) use of the locus of the Fr\'echet mean $\Pi(V)$ as an analogue of a principal component in tree-space, (ii) proof that $\Pi(V)$ has the desired dimension, and (iii) the geometric projection algorithm for projecting data onto $\Pi(V)$. 
The locus of the Fr\'echet mean was first proposed as a geometric object for principal component analysis in tree-space by~\cite{wey_thesis}, though in~\cite{penn16} Pennec has made a similar proposal for analogues of principal component analysis in Riemannian manifolds and other geodesic metric spaces. 
The barycentric subspaces of Pennec correspond exactly to the surfaces $\Pi(V)$ considered in this paper. 
Pennec's methodology, however, is principally based in the context of a Riemannian manifold rather than in tree-space, though he points out the potential for generalization. 
There are substantial differences between Pennec's analogue of principal component analysis on Riemannian manifolds (barycentric subspace analysis) and the methodology presented in this paper. 
In particular, a key aim of barycentric subspace analysis is to produce \emph{nested} principal components, in the sense of equation~$\eqref{equ:nested}$, while we do not make that restriction here.
For example, if we consider a surface $\Pi(V)$ for $k=2$ then the only geodesics which are obviously contained in $\Pi(V)$ are the edges, and it is unappealing to restrict the principal geodesic to be one of these. 

Our analysis has been restricted to data sets with relatively few taxa and to the construction of the first and second principal components. 
The algorithms presented in this paper scale linearly with respect to the number of data points $n$, but run in polynomial time with respect to the number of taxa $N$. 
However, by partitioning the data set for the geometric projection algorithm, parallel computer architectures can be employed and the speed-up is approximately proportional to the number of processors used. 
While the geometric projection algorithm runs relatively quickly, the calculations involved in searching for the optimal set of vertices $V$ can be very substantial. 
The experimental data sets in Section~\ref{sec:results} took between 1 and 3 days to analyse, running on 4 processors each. 
For higher order components with $k>2$, this computational burden will increase, and it is likely that finding a global minimum for $D^2_Z(\Pi(V))$ will be more difficult. 
The figures in Section~\ref{sec:results} show the potential for creating visualizations of the data which reveal meaningful biological structure. 
The pattern of projected points obtained for the experimental data sets we considered were very similar to results obtained via multidimensional scaling. 
However, multidimensional scaling is not capable of revealing the features of the data set which cause the observed variation. 
More information could be included in the graphical representation of our results, such as the distance of the data points from their projections, information about the principal geodesic, and the proximity of points to orthant boundaries. 

We presented the geometric projection algorithm without a proof of convergence, and used simulation to assess its accuracy. 
The algorithm is attractive in that it is defined entirely in terms of the geodesic structure on tree-space, and so it could be used on any geodesic metric space, including Riemannian manifolds. 
The algorithm clearly deserves further investigation, and we aim to study its properties in a future publication.

\section*{Acknowledgement}
The authors thank D. Howe from University of Kentucky for useful comments on the analysis of apicomplexa data set. 

\end{document}